\newtheorem{theorem}{Theorem}
\newtheorem{proposition}{Proposition}[section]
\newtheorem{definition}{Definition}[section]
\newtheorem{lemma}{Lemma}[section]
\newtheorem{problem}{Problem}
\providecommand{\expectation}[2]{\mathbb{E}_{#2}\left[#1\right]}
\providecommand{\probab}[2]{\mathbb{P}_{#2}\left\{#1\right\}}
\providecommand{\whp}{\textbf{whp}}
\providecommand{\rg}[1]{G(#1)}
\providecommand{\rig}[1]{G_{\mathcal{I}}(#1)}
\providecommand{\rsg}[1]{G_{\mathcal{S}}(#1)}
\providecommand{\rigs}{random intersection graphs}
\providecommand{\ktree}{$k$-tree}
\providecommand{\ktrees}{$k$-trees}
\providecommand{\binom}[2]{{#1\choose#2}}
\providecommand{\degree}[2]{{\textrm{deg}_{#1}(#2)}}
\providecommand{\qed}{\hfill $\blacksquare$}
\newenvironment{proof}[0]{\textit{Proof.} }{\hfill  \qed} 
\newenvironment{problem-env}[1]{
\vskip 3mm
\begin{problem} 
\textbf{#1}
}
{
\end{problem}
}
\newenvironment{neat-list}[3]{
  \vskip 1mm
  \begin{list}{}
  {\itemsep 2mm  \labelsep #1 \labelwidth #2 \leftmargin #3
   \partopsep 0mm \topsep 0mm \parsep 0mm  \itemindent 0mm}
}
{
  \end{list}
  \vskip 3mm
}
\begin{document}

\title{Treewidth of Erd\"{o}s-R\'{e}nyi Random Graphs, Random Intersection Graphs, and 
  Scale-Free Random Graphs
}

\author{Yong Gao \thanks{Supported in part by NSERC Discovery Grant RGPIN 327587-09} \\
    Department of Computer Science, \\
    Irving K. Barber School of Arts and Sciences \\
    University of British Columbia Okanagan, \\
    Kelowna, Canada V1V 1V7 \\
}   
\maketitle 

\begin{abstract}
We prove that the treewidth of an Erd\"{o}s-R\'{e}nyi random graph $\rg{n, m}$ is, with
high probability, greater than $\beta n$ for some constant $\beta > 0$ if 
the edge/vertex ratio $\frac{m}{n}$ is greater than 1.073. Our lower bound 
$\frac{m}{n} > 1.073$ improves the only previously-known lower bound established in \cite{kloks94}.  
We also study the treewidth of random graphs under two other random models for large-scale complex networks. In particular, our result on the treewidth of \rigs~strengths a previous observation in \cite{karonski99cpc} on the average case behavior of the \textit{gate matrix layout} problem. For scale-free random graphs based on the  Barab\'{a}si-Albert preferential-attachment model, our result shows that  if more than 12 vertices are attached to a new vertex, then the treewidth of the obtained network is linear in the size of the network with high probability.   
\end{abstract}

\section{Introduction}
\label{Introduction}
Treewidth plays an important role in characterizing the structural properties of a graph
and the complexity of a variety of algorithmic problems of practical importance
\cite{bodlaender93,kloks94}. When restricted to instances with bounded treewidth, 
many NP-hard problems are polynomially sovable. Dynamic programming algorithms based on the
tree-decomposition of graphs have found many applications in research field
such as computational biology and artificial intelligence \cite{dalmau02,dechter01}.

The theory of random graphs pioneered by the work of Erd\"{o}s and R\'{enyi} 
\cite{erdos60} deals with the probabilistic behavior of various graph properties such as the connectivity, the colorability, and the size of (connected) components
\cite{erdos60,bollobas01,achlioptas99sharp,friedgut99}.
Random intersection graphs and scale-free random graphs were proposed 
as more realistic models for large-scale complex networks arising in
real-world domains such as communication networks (Internet, WWW, Wireless and P2P networks), computational biology (protein networks), and sociology 
(social networks). It has been hoped that these new models will be able to capture    
the common features of these networks in a better way and in the mean time, are mathematically
approachable and algorithmically tractable \cite{cooper05,ferrante08tcs,gao09tcs,silvio09stoc}.  

As treewidth is one of the most important structural parameters used to capture the 
algorithmic tractability of computationally hard problems, it is interesting 
to see how large the treewidth of  a typical graph is in these random models. Of course, studying the probabilistic behavior of the treewidth of these random graphs is itself an interesting combinatorial problem. 
Except for a result in \cite{kloks94} establishing an lower bound  on the threshold of having a linear treewidth of the Erd\"{o}s-R\'{e}nyi random graph, we are not aware of any other work in the literature. In the paper, we study the treewidth of random graphs under 
the following three random models:
\begin{enumerate}
\item \textbf{The Erd\"{o}s-R\'{e}nyi model \cite{bollobas01,erdos60}}. An Erd\"{o}s-R\'{e}nyi random graph $\rg{n, m}$ is defined on $n$ vertices and contain $m$ edges selected from the 
$N = \binom{n}{2}$ potential edges uniformly at random and without replacement.
\item \textbf{The random intersection model \cite{karonski99cpc}}. A random intersection graph $\rig{n, m, p}$ on $n$ vertices is defined as follows. Let $M = \{1, 2, \cdots, m\}$ be a fixed universe of size $m$. Each vertex  $v$ is associated with a subset $S_v \subset M$ that is obtained by including each element in $M$ independently with probability $p$. These $S_v$'s are determined independently as well. There is an edge between a pair of vertices $u$ and $v$ if and only if $S_u \cap S_v \not= \emptyset$. 
\item \textbf{The Barab\'{a}si-Albert scale-free model \cite{albert02complex}}. 
A Barab\'{a}si-Albert random graph  $\rsg{n, m}$ on a set of $n$ vertices $\{v_1, \cdots, v_n\}$ is defined by a graph evolution process in which vertices are added to the graph one at a time.
In each step, the newly-added vertex is connected to
$m$ existing vertices selected according to 
the \textit{preferential attachment} mechanism, i.e. an existing vertex is selected with probability in proportion to its degree.          
\end{enumerate} 

We establish a lower bound  1.073 on the edge/vertex ratio $\frac{m}{n}$ 
above which an Erd\"{o}s-R\'{e}nyi random graph $\rg{n, m}$ has a treewidth 
linear to the number of vertices with high probability. Our lower bound
improves the previous one $\frac{m}{n} > 1.18$ in \cite{kloks94}. 
We obtain similar results on the behavior of the treewidth 
for the random intersection graph $\rig{n, m, p}$ and the  Barab\'{a}si-Albert scale-free random graph $\rsg{n, m}$. Our result on $\rig{n, m, p}$ complements an observation in \cite{karonski99cpc} on the average case behavior of the \textit{gate matrix layout} problem. 
Our result on the scale-free random graph $\rsg{n, m}$ shows that if more than 12 vertices are attached to a new vertex, then the treewidth of the obtained network is linear in the size of the network with high probability.             
Our results are summarized in the following theorems: 
 
\begin{theorem}
\label{thm-treewidth-bound}
Let $\rg{n, m}$ be an Erd\"{o}s-R\'{e}nyi random graph. 
For any $\frac{m}{n} \geq 1.073$, there is a constant
$\beta > 0$ such that
\begin{equation}
\label{treewidth-bound-eq}
  \lim\limits_{n\rightarrow \infty}\probab{tw(\rg{n, m}) > \beta n}{\rg{n, m}} = 1.
\end{equation}
\end{theorem}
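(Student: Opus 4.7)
The plan is to prove Theorem~\ref{thm-treewidth-bound} by a first moment argument applied to small balanced separators. The starting point is the classical tree-decomposition fact that if $tw(G) < k$, then $V(G)$ admits a partition $V = A \cup S \cup B$ with $|S| \leq k+1$, no edge between $A$ and $B$, and both $|A|, |B| \leq n/2$; since $|A|+|B|=n-|S|$, each side is simultaneously bounded below. Consequently, to establish $tw(\rg{n,m}) > \beta n$ \whp\ it suffices to show that \whp\ no such ``balanced separation'' with $|S| \leq \beta n$ exists in $\rg{n,m}$, and this is a natural target for the first moment method.

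I would then bound the expected number of balanced separations in $\rg{n,m}$ of the above form. For fixed sizes $s+a+b = n$, the number of ordered labelled triples $(S,A,B)$ is the multinomial $\binom{n}{s,a,b}$, and the probability that none of the $ab$ potential $A$--$B$ edges is among the $m$ chosen edges is exactly the hypergeometric ratio $\binom{N-ab}{m}\big/\binom{N}{m}$ with $N=\binom{n}{2}$. Stirling's formula then writes the logarithm of a single summand as $n\,F(\alpha,\beta;c) + O(\log n)$ for an explicit function $F$ of the rescaled parameters $\alpha = a/n$, $\beta = s/n$, and the density $c = m/n$, subject to the balance constraints $\alpha, 1-\alpha-\beta \in [1/2 - \beta, 1/2]$. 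The target is to exhibit a $\beta_0 > 0$ such that $F(\alpha,\beta_0;c) < 0$ uniformly on the feasible region for every $c \geq 1.073$; the expected count then decays as $e^{-\Omega(n)}$, and Markov's inequality yields the theorem.

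The main obstacle, and the source of the sharpened threshold $1.073$ over the value $1.18$ from \cite{kloks94}, is carrying out this optimization tightly enough. Two refinements are crucial. First, one should retain the exact hypergeometric probability $\binom{N-ab}{m}/\binom{N}{m}$ rather than the looser independent-edge surrogate $(1-ab/N)^m$: the latter discards a term of the form $(1 + ab/(N-ab))^m$ whose logarithm contributes nontrivially in the exponent and is precisely what is ``left on the table'' in the weaker bound. Second, one should optimize the balance parameter $\alpha$ freely rather than fix a symmetric $1/3$--$1/3$--$1/3$ split, since the worst-case $\alpha$ in $F$ is in general not symmetric; the interaction between $\alpha(1-\alpha-\beta)$ (which controls the edge-avoidance factor) and the multinomial entropy determines the true extremal point. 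Once the variational problem is set up correctly, verifying $\max_{\alpha,\beta\leq\beta_0} F(\alpha,\beta;1.073) < 0$ for a suitable small $\beta_0$ reduces to a finite numerical check; combined with monotonicity of $F$ in $c$, this establishes the theorem for every $c \geq 1.073$.
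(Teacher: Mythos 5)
Your overall strategy---first moment over balanced separations plus Markov---is the right skeleton, and it is the skeleton the paper uses. But the two refinements you identify as the source of the improvement to $1.073$ do not actually deliver it, and the idea that does is missing. First, the hypergeometric correction is not exponential: writing $\binom{N-ab}{m}/\binom{N}{m}=\prod_{i=0}^{m-1}\bigl(1-\tfrac{ab}{N-i}\bigr)$ with $m=cn$, $ab=\Theta(n^2)$, $N=\Theta(n^2)$, each factor differs from $1-ab/N$ by $O(1/n)$, so the whole product differs from $(1-ab/N)^m$ by only a multiplicative $e^{O(1)}$. It therefore cannot move the threshold. Second, optimizing the balance parameter freely is already standard (and is done in the paper, where the extremal point sits at the endpoint $|B|=\tfrac{2}{3}n$). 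The paper states explicitly that the plain first moment over all balanced partitions, even with a refined analytic optimization, plateaus at about $1.083$. To reach $1.073$ the paper restricts the objects being counted: it shows (its Lemma~\ref{treewidth-partition-lem}) that a graph of treewidth at most $l$ has a balanced $l$-partition that is either nearly equal ($|B|\le|A|+d$) or \emph{$d$-rigid}, meaning $G[B]$ has no connected component on at most $d$ vertices (otherwise such a component could be moved from $B$ to $A$). The expected count is then split into these two classes, and the $d$-rigid class carries an extra exponentially small factor: the conditional probability that $G[B]$ has no small tree component, given that $(S,A,B)$ is a partition. Bounding that probability (Theorem~\ref{thm-rigid-conditional}) is the technical heart of the paper and requires a Hoeffding--Azuma/McDiarmid concentration argument applied to a carefully weighted count of tree components of size at most $d$, carried out in the with-replacement model $\overline{G}(n,m)$ (whose equivalence to $\rg{n,m}$ for monotone properties is Proposition~\ref{prop-equiv}). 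None of this appears in your plan, so your optimization target $\max F<0$ at $c=1.073$ would simply fail.

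A secondary but consequential error: the balanced-separator guarantee coming from a tree decomposition is $\tfrac{1}{3}(n-|S|)\le|A|,|B|\le\tfrac{2}{3}(n-|S|)$, not $|A|,|B|\le n/2$ (three equal components of $G-S$ already defeat the $1/2$ bound). Your feasible region $\alpha,1-\alpha-\beta\in[1/2-\beta,1/2]$ is therefore too small; since the worst case in the correct optimization occurs at the $2/3$ endpoint, restricting to the symmetric region would produce a spuriously low threshold even for the unrestricted first moment.
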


\begin{theorem}
\label{thm-intersection-graph}
Let $\rig{n, m, p}$ be a random intersection graph with the universe 
$M = \{1, \cdots, m\}$ and $m = n^{\alpha}$. 
For any $p \geq \frac{2}{m}$ and $\alpha > 0$, there exists a constant $\beta > 0$ such that
\begin{equation}
\label{eq-thm2-0}
  \lim\limits_{n\rightarrow \infty}\probab{tw(\rig{n, m, p}) > \beta n}{\rig{n, m, p}} = 1.
\end{equation}  
\end{theorem}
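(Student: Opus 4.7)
The plan is to reduce the treewidth lower bound for $\rig{n,m,p}$ to Theorem~\ref{thm-treewidth-bound} by exhibiting inside $\rig{n,m,p}$ a linear-sized subgraph that is, up to a suitable coupling, an Erd\"{o}s-R\'{e}nyi random graph above the density threshold $1.073$. The key structural observation I would exploit is that $\rig{n,m,p}$ can be written as the union of $m$ mutually independent random cliques $K_{C_1},\ldots,K_{C_m}$, where $C_i=\{v\in V:i\in S_v\}$ is obtained by including each vertex independently with probability $p$; the $C_i$'s are independent because, for distinct $i$, the Bernoulli variables $\mathbf{1}[i\in S_v]$ use disjoint coin flips.

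First, I would use the fact that $|S_v|\sim\mathrm{Binomial}(m,p)$ with mean $mp\ge 2$, together with a Chernoff/Poisson-approximation argument, to show that with high probability a constant fraction of the vertices lie in at least two of the cliques $C_i$; call these the active vertices and write $V^{*}\subseteq V$ for their set, which has linear size. Next, I would restrict attention to the elements $i$ for which $|C_i|=2$, each of which contributes a single edge to $\rig{n,m,p}$. Conditioned on the identities of the high-degree elements, these size-two contributions behave like an independent-edge process on $V^{*}$, and I would construct a monotone coupling between this process (restricted to $V^{*}$) and an Erd\"{o}s-R\'{e}nyi graph $\rg{n',m'}$ with $n'=\Theta(n)$ vertices and density $m'/n'$ that can be made to exceed $1.073$ by exploiting the flexibility in $p$ and $\alpha$. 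Applying Theorem~\ref{thm-treewidth-bound} to the embedded ER subgraph and invoking the monotonicity of treewidth under subgraph containment then gives $tw(\rig{n,m,p})\ge \beta n$ with high probability.

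The main obstacle is the positive correlation between edges of $\rig{n,m,p}$: two edges sharing an endpoint that arise from the same element are perfectly correlated, which prevents a direct appeal to results for independent-edge models. My intended resolution is to treat the incidence Bernoullis, not the induced edges, as the primary randomness and to carry out the coupling with the ER graph at that finer level, isolating the size-two-clique contributions so that the surviving edges are approximately independent. A secondary delicate point is the range of $\alpha$: for small $\alpha$ the typical clique $C_i$ already has size $\Theta(n^{1-\alpha})$ and a single clique (or a small union of them) may already give a strong bound, whereas for larger $\alpha$ the bound must be extracted by aggregating many small cliques, so the coupling and concentration estimates must be tuned to the regime.
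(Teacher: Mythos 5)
Your reduction-to-Theorem~\ref{thm-treewidth-bound} strategy is genuinely different from what the paper does: the paper runs the first-moment method directly on balanced $\beta n$-partitions of $\rig{n, m, p}$, computing for each element $e\in M$ the probability that $e$ fails to hit both sides of the partition, so that the $m$ independent elements give the bound $\left((1-p)^{|A|}+(1-p)^{|B|}-(1-p)^{|A|+|B|}\right)^{m}$, and it never invokes the Erd\"{o}s--R\'{e}nyi result. Your coupling idea for the size-two cliques is, in isolation, sound: conditioned on $|C_i|=2$ the pair $C_i$ is uniform over the $\binom{n}{2}$ pairs and distinct elements are independent, so the size-two-clique subgraph really is an Erd\"{o}s--R\'{e}nyi graph (in the with-replacement sense) given its edge count $K$. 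The fatal problem is quantitative: at the threshold $p=2/m$ the number $K$ of elements with $|C_i|=2$ is far too small. One has $K\approx m\binom{n}{2}p^{2}(1-p)^{n-2}\approx m\,\frac{(np)^{2}}{2}e^{-np}\le \frac{2}{e^{2}}\,m$, so for $\alpha\le 1$ the edge/vertex ratio of the embedded subgraph is at most $2e^{-2}\approx 0.27$ --- deep in the subcritical regime where $\rg{n, cn}$ has bounded treewidth --- and no admissible choice of $p\ge 2/m$ pushes it past $1.073$. Restricting to your set $V^{*}$ of active vertices makes this worse, not better: only about a $(|V^{*}|/n)^{2}$ fraction of those edges land inside $V^{*}$, so the density on $V^{*}$ drops further.

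The regime split is also not repairable by ``tuning the estimates''. For $\alpha<1$ the clique sizes concentrate around $np=2n^{1-\alpha}\to\infty$, so with high probability there are \emph{no} size-two cliques at all, and your fallback --- that a single clique or a small union of cliques already gives the bound --- only yields treewidth $\Theta(n^{1-\alpha})=o(n)$, not $\beta n$. Finally, the remark that the density ``can be made to exceed $1.073$ by exploiting the flexibility in $p$ and $\alpha$'' misreads the statement: $p$ and $\alpha$ are given adversarially (any $p\ge 2/m$, any $\alpha>0$), so there is nothing to tune. A reduction through independent single-edge contributions discards most of the graph's edges and cannot reach the required density; the separator probability has to be computed against the intersection structure itself, which is exactly what the paper's direct first-moment argument does.
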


\begin{theorem}
\label{thm-power-law-graph}
Let $\rsg{n, m}$ be the Barab\'{a}si-Albert random graph. For any $m \geq 12$, there is 
a constant $\beta > 0$ such that    
\begin{equation}
\label{eq-thm3-0}
  \lim\limits_{n\rightarrow \infty}\probab{tw(\rsg{n, m}) > \beta n}{\rsg{n, m}} = 1.
\end{equation}
\end{theorem}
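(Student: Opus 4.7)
The plan is to lower-bound the treewidth of $\rsg{n,m}$ by ruling out the existence of small balanced vertex separators. Using the standard fact that $tw(G) + 1 \geq s(G)$, where $s(G)$ denotes the minimum size of a vertex set $S$ whose removal leaves each connected component of $G \setminus S$ of size at most $\tfrac{2}{3}(n - |S|)$, it suffices to show that with high probability there is no tripartition $V = A \sqcup S \sqcup B$ with $|S| \leq \beta n$ and $|A|, |B| \geq n/3$ for which no edge of $\rsg{n,m}$ runs between $A$ and $B$.

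I would work in the Bollob\'{a}s--Riordan linearized chord diagram representation of $\rsg{n,m}$: the preferential-attachment graph is generated from a uniform random perfect matching of $2mn$ labeled half-edges partitioned into $n$ consecutive blocks of size $2m$, one block per vertex. In this representation, the event that $\rsg{n,m}$ has no edge between $A$ and $B$ is implied by the purely combinatorial event that the random matching has no pair with one endpoint in the $2m|A|$ half-edges of $A$ and the other in the $2m|B|$ half-edges of $B$, since collapsing loops and parallel edges only decreases the crossing-edge count.

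Next, for a fixed tripartition with $|A| = \alpha n$, $|S| = s_0 n$, $|B| = \beta_0 n$, a direct counting argument in the matching model gives an exponential upper bound on the probability of no crossing matching pair; concretely, one extracts a bound of the form $(\alpha + s_0)^{2m\alpha n}$ up to subexponential factors (each of the $2m\alpha n$ half-edges of $A$ must be paired with a half-edge in $A$ or $S$). Combining this with the multinomial bound $\binom{n}{\alpha n,\, s_0 n,\, \beta_0 n} \leq 2^{n H(\alpha, s_0, \beta_0)}$ on the number of tripartitions and summing over the polynomially many admissible size vectors yields
\begin{equation*}
\mathbb{E}[\text{number of bad tripartitions}] \leq \mathrm{poly}(n) \cdot \max_{\alpha, s_0, \beta_0} \exp\!\bigl(n \cdot (H(\alpha, s_0, \beta_0)\ln 2 + 2m\alpha \ln(\alpha + s_0))\bigr).
\end{equation*}

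The main obstacle is the final step: choosing $\beta > 0$ and verifying that the exponent $H(\alpha, s_0, \beta_0)\ln 2 + 2m\alpha \ln(\alpha + s_0)$ is strictly negative for every admissible triple satisfying $\alpha + s_0 + \beta_0 = 1$, $s_0 \leq \beta$, and $\alpha, \beta_0 \geq 1/3$. A Lagrange-multiplier analysis (or a careful numerical optimization) should reduce this to a one-variable inequality in $m$, and the threshold $m \geq 12$ is expected to emerge as the smallest integer for which some positive $\beta$ makes the exponent uniformly negative over the balanced region; this delicate balance between the entropic cost of the tripartition and the exponential decay of the matching event is the heart of the argument. Once this is established, Markov's inequality converts the first-moment bound into the desired \whp{} statement $tw(\rsg{n,m}) \geq \beta n$.
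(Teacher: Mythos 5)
Your overall strategy --- first moment over balanced separations $(S,A,B)$ with $|S|\le\beta n$ and no $A$--$B$ edge, followed by Markov's inequality --- is the same as the paper's. The gap is in your static representation of the preferential-attachment graph. In the Bollob\'as--Riordan linearized chord diagram model the $2mn$ endpoints are \emph{not} partitioned into $n$ fixed consecutive blocks of size $2m$: the block of points identified with vertex $i$ is determined by the positions of the right endpoints of the random pairing itself (vertex $i$ consists of the points strictly after the $(i-1)m$-th right endpoint up to and including the $im$-th), so its size is random and correlated with the matching. What you have actually described --- a uniform perfect matching on $2mn$ half-edges with fixed blocks of size $2m$ per vertex --- is the configuration model of a random $2m$-regular multigraph, a different object with no power-law degrees. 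Consequently ``the $2m|A|$ half-edges of $A$'' is not a well-defined deterministic set in the correct model, and the bound $(\alpha+s_0)^{2m\alpha n}$ does not apply to $\rsg{n,m}$. A symptom of this is that your exponent $H(\alpha,s_0,\beta_0)\ln 2 + 2m\alpha\ln(\alpha+s_0)$ is already uniformly negative on the balanced region for $m=2$ (e.g.\ at $\alpha=\beta_0=\tfrac12$, $s_0=0$ it equals $\ln 2 - m\ln 2$), so the threshold $12$ would not ``emerge'' from your optimization; you would be proving a (true but different) statement about random regular graphs.

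The paper sidesteps the need for any static representation: it splits the vertices into the first half $I_1$ and second half $I_2$, and for each late vertex $v_i$ it bounds, conditionally on the graph built so far, the probability that all $m$ of its attachment choices avoid the opposite side's intersection with $I_1$. The only structural input is that every vertex in $I_1$ has degree at least $m$ while the total degree of $G_{i-1}$ is at most $2nm$, giving a factor of at most $(1-s/2)^m$ per late vertex, where $sn=|I_1\cap A|$; multiplying over $I_2$ and optimizing the resulting function $f(s,\beta)$ numerically yields the constant $0.9425$ and hence the condition $0.9425^{m}<\tfrac12$, i.e.\ $m\ge 12$. If you want to salvage a static argument, you would need to work with the true LCD blocks (each vertex contributes exactly $m$ right endpoints, so at least $m$ half-edges, but a random number of left endpoints) and condition appropriately --- at which point you essentially recover the paper's sequential degree-counting bound.
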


\subsection{Technical Contribution}
The approach used in \cite{kloks94} is essentially an application of 
the first-moment method
to the random variable that counts the total number of the \textit{balanced partitions}
$(S, A, B)$ where the size of the separator $S$ is at most $\beta n$ 
(See Section~\ref{sec-thm-1} for the formal definition of a balanced partition.)  It is further commented in \cite{kloks94} that it was not known whether the 1.18 lower bound 
can be improved and that the treewidth of the random graph $\rg{n, m}$ with 
$\frac{1}{2}< \frac{m}{n} < 1$ is unknown. 

Our main contribution in this paper is in the proof of our improved lower bound  
$\frac{m}{n} > 1.073$. We note that a more refined analytical calculation is able to 
improve the lower bound 1.18 in \cite{kloks94} to 1.083. The difficulty lies in bringing 
down the lower bound further from 1.083 to 1.073.  To achieve this, we introduce the notion of
$d$-rigid and balanced partitions $(S, A, B)$ which are maximally balanced 
in the sense that no vertex subset of certain size from the larger part, say $B$, can be moved 
to the smaller one $|A|$ to create a new balanced partition. The motivation is that by considering the expected number of these more restricted partitions, 
we will be able to get a more accurate estimation when applying Markov's inequality\footnote{The idea of restricting the kinds of combinatorial objects to be considered have been used in the study of the threshold for the satisfiability of random CNF formulas and the chromatic number of random graphs\cite{kirousis94threshold,achlioptas99thesis,kirousis09threshold}}. 
 
The difficulty we have to overcome in the case of treewidth is the estimation of the expected 
number of $d$-rigid and balanced partition $(S, A, B)$ in $\rg{n, m}$. To do this, 
an exponentially small upper bound is required on the probability that the induced
subgraph $G[B]$ of the random graph $\rg{n, m}$ doesn't have small-sized tree components.

We managed to obtain such an exponentially small upper bound in a ``conditional" probability space, which is equivalent to the Erd\"{o}-R\'{e}yni random model as far as the size of the treewidth is concerned,  by  using a Hoeffding-Azuma style inequality.
To achieve the best possible  Lipschitz constant in our application of the Hoeffding-Azuma inequality, we used a ``weighted" count on the number of tree components of size up to a fixed constant $d$.  We are not aware of any other application of the Hoeffding-Azuma inequality  
in the study of random discrete structures where this idea of weighted counts is beneficial.       

 
\subsection{Outline of the Paper}
The next section fixes our notation and contains preliminaries. Also discussed in this section
is a variant of the Erd\"{o}-R\'{e}nyi model for random graphs which we will be using in our proofs. Sections 3 - 5 contain the proofs of Theorem \ref{thm-treewidth-bound}, 
Theorem \ref{thm-intersection-graph}, and Theorem \ref{thm-power-law-graph} respectively. The two appendices contain the proof of some necessary lemmas.

\section{Notation and Preliminaries}

Throughout this paper, all logarithms are natural logarithms, i.e., 
to the base $e$. The cardinality of a set $U$ is denoted by $|U|$.   
All graphs are undirected and standard 
terminologies in graph theory \cite{west01} are used.  Given a graph $G(V, E)$ and a vertex
$v \in V$, we use $N(v)$ to denote the set of neighbors of $v$, i.e.,
$$
  N(v) = \{u \in V ~|~  u \not= v \textrm{ and } (u, v) \in E \}.
$$
Given a vertex subset $U$, we use $N(U)$ to denote the neighborhood of $U$, i.e.,
$$
N(U) = \{w \in V \setminus U ~|~ (w, u) \in E \textrm{ for some } u \in U\}.
$$  
The induced subgraph on a subset of vertices $U$ is denoted by $G[U]$.
By a component of a graph, we mean a maximal connected subgraph.  

In the proofs, we will be using the following upper bound on $\binom{n}{\delta n}$ that 
can be derived from Stirling's formula:
\begin{lemma}
\label{lem-stirling}
For any constants $0 < \beta < 1$, 
\begin{equation} 
\binom{n}{\beta n } \leq 
   \frac{\theta}{\sqrt{\beta(1 - \beta)n}} \left(\frac{1}{\beta^{\beta}
     (1 - \beta)^{1 - \beta}}\right)^n. \nonumber
\end{equation} 
where $\theta > 0$ is a constant.
\end{lemma}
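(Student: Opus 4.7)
The plan is a direct computation via Stirling's formula, wrapped so that a single multiplicative constant $\theta$ absorbs all lower-order error terms (and any rounding issue arising from $\beta n$ not being an integer).

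First I would invoke Stirling in the form
\begin{equation}
n! = \sqrt{2\pi n}\,\Bigl(\tfrac{n}{e}\Bigr)^n \bigl(1 + O(1/n)\bigr), \nonumber
\end{equation}
and substitute it into $\binom{n}{\beta n} = n!/((\beta n)!\,((1-\beta)n)!)$. The $e^n$ factor in the numerator cancels exactly against the product $e^{\beta n}\cdot e^{(1-\beta)n}$ coming from the two factorials in the denominator, while the power-of-$n$ part of the numerator, $n^n$, combines with the corresponding $(\beta n)^{\beta n}((1-\beta)n)^{(1-\beta)n}$ in the denominator to produce the base $\bigl(\beta^\beta(1-\beta)^{1-\beta}\bigr)^{-n}$ as required. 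This is the only exponentially large contribution.

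Next I would collect the square-root prefactors: the numerator contributes $\sqrt{2\pi n}$, and the denominator contributes $\sqrt{2\pi \beta n}\cdot \sqrt{2\pi(1-\beta)n}$. Their ratio simplifies to $\bigl(2\pi\,\beta(1-\beta)\,n\bigr)^{-1/2}$, which supplies the $1/\sqrt{\beta(1-\beta)n}$ factor in the statement. The remaining $\bigl(1 + O(1/n)\bigr)$ error factors from the three applications of Stirling are bounded by a universal constant for all $n \geq 1$; together with the factor $1/\sqrt{2\pi}$ they can be absorbed into the constant $\theta$.

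To handle the issue that $\beta n$ need not be an integer, I would replace $\beta n$ and $(1-\beta)n$ by $\lfloor \beta n \rfloor$ and $n - \lfloor \beta n\rfloor$ in the Stirling substitution and note that the rounding introduces at most an $O(1)$ multiplicative distortion in both the polynomial prefactor and the base of the exponential, which is again harmless once absorbed into $\theta$. The only step that requires any care at all is verifying that this rounding-induced distortion is uniformly bounded in $n$ for each fixed $\beta \in (0,1)$, which follows from continuity of $x \mapsto x^x$ on $(0,1)$ and the fact that $\lfloor \beta n\rfloor / n \to \beta$; in particular, $\theta$ may depend on $\beta$ but not on $n$, which is all the lemma claims.
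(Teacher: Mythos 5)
Your proof is correct and follows exactly the route the paper intends: the paper gives no explicit proof of this lemma, stating only that the bound ``can be derived from Stirling's formula,'' which is precisely the computation you carry out (including the sensible handling of the non-integer $\beta n$ issue). No discrepancy to report.
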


We also need the  following three lemmas on the properties of some useful functions.  
The proof of these lemmas are incldued in Appendix 2.

\begin{lemma}
\label{lem-function-0}
On internal $(0, 1)$, the function 
\begin{equation}
\label{eq-function-0}
f(t) = t^t(1-t)^{1 - t} \nonumber
\end{equation}  
attains its minimum at  $t = \frac{1}{2}$ and $\lim\limits_{t \rightarrow 0}f(t) = 1$.  Furthermore, $f(t)$ is decreasing on the interval $(0, \frac{1}{2}]$ and decreasing on the 
interval $[\frac{1}{2}, 1)$.
\end{lemma}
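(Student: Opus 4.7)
The plan is to reduce the claim to a routine calculus exercise by passing to the logarithm. Setting $g(t) = \log f(t) = t \log t + (1-t)\log(1-t)$, the statements about monotonicity and minimum of $f$ are equivalent to the same statements about $g$, since $\log$ is strictly increasing on $(0,\infty)$ and $f(t) > 0$ on $(0,1)$.

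Next I would differentiate: $g'(t) = \log t + 1 - \log(1-t) - 1 = \log\!\bigl(\tfrac{t}{1-t}\bigr)$. The fraction $\tfrac{t}{1-t}$ equals $1$ precisely at $t = 1/2$, is strictly less than $1$ on $(0, 1/2)$, and strictly greater than $1$ on $(1/2, 1)$. Hence $g'(t) < 0$ on $(0, 1/2)$, $g'(1/2) = 0$, and $g'(t) > 0$ on $(1/2, 1)$, which gives that $g$, and therefore $f$, is strictly decreasing on $(0, 1/2]$ and strictly increasing on $[1/2, 1)$, with its unique minimum attained at $t = 1/2$. (I read the second ``decreasing'' in the statement as a typo for ``increasing''; otherwise the claim of a minimum at $1/2$ is inconsistent.)

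For the limit, I would invoke the standard fact that $\lim_{t \to 0^+} t \log t = 0$ (for instance via L'H\^opital applied to $\frac{\log t}{1/t}$), together with continuity of $(1-t)\log(1-t)$ at $t = 0$ where it evaluates to $0$. These give $\lim_{t \to 0^+} g(t) = 0$, hence $\lim_{t \to 0^+} f(t) = e^{0} = 1$.

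No step is really an obstacle; the only mild subtlety is the indeterminate form at the endpoint, which is handled by the well-known $t \log t \to 0$ limit. The entire argument is standard and could be written in just a few lines; I would present it in exactly the order above so that the monotonicity result (which is the one actually invoked later in the paper) sits at the front.
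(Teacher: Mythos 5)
Your proof is correct and follows essentially the same route as the paper: pass to $\log f(t) = t\log t + (1-t)\log(1-t)$, differentiate to get $\log\bigl(\tfrac{t}{1-t}\bigr)$, and read off the sign change at $t=\tfrac12$. You are also right that the second ``decreasing'' in the statement is a typo for ``increasing''; in fact your sign analysis is more careful than the paper's one-line proof, which itself mislabels the two intervals (claiming increasing then decreasing), and you additionally supply the $t\log t \to 0$ argument for the limit, which the paper omits.
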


\begin{lemma}
\label{lem-function-1}
Let $r(t)$ is a function defined as
\begin{equation}
\label{eq-function-2}
r(t) = \frac{2t^{2}}{(1 + \epsilon)^2c} \left(\frac{1}{e}\right)^{\frac{4ct}{1 -
            2t(1-t)}}
\end{equation}
where $c > 0$ is a constant. For any $c > 1$ and sufficiently small $\beta > 0$,  
$r(t)$ is decreasing on the interval $[\frac{1 - \beta}{2}, \frac{2}{3}]$.
\end{lemma}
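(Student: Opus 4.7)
The plan is to prove the monotonicity by examining the sign of $(\log r)'(t)$; since $r(t) > 0$ on the interval in question, $r$ is decreasing there precisely when $\log r$ is. I would begin by writing
\begin{equation*}
\log r(t) = \log\frac{2}{(1+\epsilon)^{2}c} + 2\log t - \frac{4ct}{1 - 2t(1-t)},
\end{equation*}
and differentiating. Applying the quotient rule to the last term (with denominator $1-2t+2t^{2}$ whose derivative is $4t-2$), one obtains
\begin{equation*}
(\log r)'(t) = \frac{2}{t} - \frac{4c\,(1 - 2t^{2})}{(1 - 2t + 2t^{2})^{2}}.
\end{equation*}

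The desired condition $(\log r)'(t) \le 0$ is therefore equivalent, after multiplying through by the positive quantity $t(1-2t+2t^{2})^{2}$, to the algebraic inequality
\begin{equation*}
(1 - 2t + 2t^{2})^{2} \;\le\; 2ct\,(1 - 2t^{2}) \qquad \text{for } t \in \left[\tfrac{1-\beta}{2},\,\tfrac{2}{3}\right].
\end{equation*}
Both sides are positive on this interval, since $2(2/3)^{2} = 8/9 < 1$ guarantees $1 - 2t^{2} > 0$, so the reduction is lossless.

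Next I would analyse this polynomial inequality. At $t = 1/2$ it reads $c/2 \ge 1/4$, which holds with substantial slack under the hypothesis $c > 1$. Because both sides are smooth in $t$, for each fixed $c > 1$ there is a neighbourhood of $t = 1/2$ on which the inequality persists, and taking $\beta$ small enough handles the portion of the interval near the left endpoint.

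The main obstacle is controlling the behaviour as $t$ approaches $2/3$, because the factor $1-2t^{2}$ on the right-hand side shrinks toward $1/9$ while $(1-2t+2t^{2})^{2}$ on the left grows; the slack in the inequality is thus tightest there. I would address this by studying the quartic $h(t) = 2ct(1-2t^{2}) - (1 - 2t + 2t^{2})^{2}$ directly, either by locating its critical points through $h'(t) = 0$ and verifying $h \ge 0$ at those points, or via the substitution $t = 1/2 + s$ to rewrite $h$ as a polynomial in $s$ and bound it uniformly on $s \in [-\beta/2,\,1/6]$. Since $h$ depends linearly on the parameter $c$, carefully tracking how the inequality tightens as $c$ decreases toward its lower bound is the most delicate part of the argument and is where the hypotheses of the lemma are brought to bear.
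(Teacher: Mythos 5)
Your differentiation is correct: $(\log r)'(t) = \frac{2}{t} - \frac{4c(1-2t^{2})}{(1-2t+2t^{2})^{2}}$, and the reduction of the claim to the polynomial inequality $(1-2t+2t^{2})^{2} \le 2ct(1-2t^{2})$ on $[\frac{1-\beta}{2},\frac{2}{3}]$ is indeed lossless since both sides are positive there. The gap is that you never carry out the verification of this inequality; you explicitly defer ``the most delicate part'' near $t=\frac{2}{3}$ to a future case analysis. That deferred step cannot be completed: at $t=\frac{2}{3}$ the inequality reads $\left(\frac{5}{9}\right)^{2} \le 2c\cdot\frac{2}{3}\cdot\frac{1}{9}$, i.e.\ $\frac{25}{81}\le\frac{4c}{27}$, which forces $c \ge \frac{25}{12}\approx 2.083$. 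For $1 < c < \frac{25}{12}$ (in particular for $c=1.073$, the value the paper actually needs), one has $(\log r)'(\frac{2}{3}) = 3-\frac{36c}{25} > 0$, so $r$ is strictly increasing at the right endpoint: $r$ decreases from $t=\frac{1}{2}$ to an interior minimum and then increases. In other words, your correctly executed reduction, if pushed through, refutes the lemma as stated rather than proving it.

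For comparison, the paper's own proof takes the same route (logarithmic differentiation, then a sign analysis of the numerator $h(t)$ obtained after clearing denominators), but it makes a sign error: the term written there as $-4c(-2t^{2}+4t^{3})$ should be $+4c(-2t^{2}+4t^{3})$, so the correct numerator is $2(1-2t+2t^{2})^{2}-4ct(1-2t^{2})$, which equals $\frac{50-24c}{81}>0$ at $t=\frac{2}{3}$ for $c<\frac{25}{12}$ (the paper's stated value $h(\frac{2}{3})=\frac{50}{81}-\frac{144}{81}c$ does not follow even from its own, erroneous formula). So the failure is not peculiar to your method: any proof along these lines must either restrict to $c\ge\frac{25}{12}$ or weaken the conclusion, and the monotonicity claim for all $c>1$ is simply false.
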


\begin{lemma}
\label{lem-function-2}
Let $g(t)$ be a function defined as
\begin{equation}
\label{eq-function-2}
g(t) = \frac{(1 - 2t + 2t^{2} + 2\beta t)^{c}}{t^{t}(1 - t)^{1 - t}}
\end{equation}
where $c > 1$ and $\beta > 0$ are constants. Then for sufficiently small $\beta$,
$g(t)$ is increasing on $[\frac{1 - \beta}{2}, \frac{2}{3}]$.
\end{lemma}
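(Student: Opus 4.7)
The plan is to verify $g'(t)>0$ throughout the interval, which, since $g>0$, is equivalent to showing that the logarithmic derivative is positive. A direct computation gives
\[
(\ln g)'(t) \;=\; \frac{c(4t-2+2\beta)}{1-2t+2t^2+2\beta t} \;-\; \ln\frac{t}{1-t} \;=:\; h_\beta(t),
\]
so the lemma reduces to showing $h_\beta(t)>0$ on $[\tfrac{1-\beta}{2},\tfrac{2}{3}]$. Two observations do the job: the value of $h_\beta$ at the left endpoint is strictly positive, and $h_\beta$ is increasing on the interval.

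For the left endpoint, plugging in $t=(1-\beta)/2$ makes the numerator of the first term vanish (since $4\cdot\tfrac{1-\beta}{2}-2+2\beta=0$), leaving $h_\beta\!\left(\tfrac{1-\beta}{2}\right)=-\ln\tfrac{1-\beta}{1+\beta}=\ln\tfrac{1+\beta}{1-\beta}>0$. For monotonicity I would first handle the $\beta=0$ case, where
\[
h_0'(t) \;=\; \frac{8ct(1-t)}{(1-2t+2t^2)^2}-\frac{1}{t(1-t)}.
\]
Substituting $s=2t(1-t)$, the sign of $h_0'$ agrees with the sign of $2cs^2-(1-s)^2$, which is non-negative iff $s\geq (1+\sqrt{2c})^{-1}$. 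For $c>1$ this threshold is strictly less than $\sqrt{2}-1\approx 0.414$, while on $[\tfrac12,\tfrac23]$ the quantity $s=2t(1-t)$ attains its minimum $\tfrac49\approx 0.444$ at $t=\tfrac23$. Hence $h_0'$ is bounded below by some $\epsilon>0$ on $[\tfrac12,\tfrac23]$.

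Because $h_\beta'$ and its formula are jointly continuous in $(t,\beta)$ on a neighborhood of $\{0\}\times[\tfrac12,\tfrac23]$, a standard compactness argument extends the strict inequality $h_0'>\epsilon/2$ to a slightly larger interval $[\tfrac12-\delta,\tfrac23]$ and then to $h_\beta'>0$ on $[\tfrac{1-\beta}{2},\tfrac{2}{3}]$ for all sufficiently small $\beta>0$. Combined with the positive left-endpoint value, this gives $h_\beta(t)\geq \ln\tfrac{1+\beta}{1-\beta}>0$ on the entire interval, so $g'(t)=g(t)\,h_\beta(t)>0$ there and $g$ is increasing.

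The subtle point is that the $\beta=0$ function $h_0$ vanishes exactly at $t=\tfrac12$, which sits right at the left end of the limiting interval, so the claim cannot be proved by the simple $\beta=0$ inequality alone; the positivity on the \emph{enlarged} interval $[\tfrac{1-\beta}{2},\tfrac{2}{3}]$ genuinely requires $\beta>0$. The role of the hypothesis $\beta>0$ is precisely to supply the positive boundary value $\ln\tfrac{1+\beta}{1-\beta}$ that is then propagated across the interval by the perturbed monotonicity of $h_\beta$.
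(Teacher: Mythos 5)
Your proof is correct, and it follows the same basic skeleton as the paper's -- pass to $h_\beta=(\log g)'$, establish positivity at the left end, and propagate it by showing $h_\beta'=(\log g)''>0$, first at $\beta=0$ and then for small $\beta>0$ by a continuity/compactness perturbation -- but the execution differs in two useful ways. The paper splits the interval at $t=\tfrac12$: on $[\tfrac{1-\beta}{2},\tfrac12]$ it argues that $g$ is a product of two increasing positive factors, and only on $[\tfrac12,\tfrac23]$ does it run the convexity argument, anchored at $h'(\tfrac12)\ge 0$ and with the positivity of the second derivative extracted from a polynomial $z(t,\beta)$ that is bounded below term by term ($z(t,0)\ge \tfrac{2}{81}$). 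You instead treat the whole interval at once, anchoring at the exact endpoint $t=\tfrac{1-\beta}{2}$, where the rational term vanishes identically and leaves the clean value $\ln\tfrac{1+\beta}{1-\beta}>0$; and your substitution $s=2t(1-t)$ collapses the sign of $h_0'$ to the quadratic inequality $2cs^2>(1-s)^2$, i.e.\ $s>(1+\sqrt{2c})^{-1}$, which is settled by the single comparison $\tfrac49>\sqrt2-1$. This buys a shorter, more transparent computation with an explicit threshold in $c$ (and your closing remark correctly identifies why $\beta>0$ is genuinely needed at the left endpoint); the paper's version, by contrast, makes the monotone-product observation on the left subinterval do some of the work and keeps the estimates entirely in terms of elementary numerical bounds. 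Both arguments are sound, and yours is, if anything, the tighter of the two.
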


\subsection{Treewidth and Random Graphs}
\label{subsec-treewidth}
The notion of treewith plays an important role in graph theory and in real world computing.   
Several equivalent definitions of treewidth exist and the one based on \ktrees~ is probably 
the easiest to explain. The graph class of \ktrees~is defined recursively as follows
\cite{kloks94}:
\begin{enumerate}
    \item A clique with k+1 vertices is a \ktree;
    \item Given a \ktree~ $T_n$ with n vertices, a \ktree~with
        $n+1$ vertices is constructed by adding to $T_n$ a new
        vertex and connecting it to a $k$-clique of $T_n$.
\end{enumerate}
A graph is called a \textbf{partial \ktree}~if it is a subgraph of a
\ktree. The treewidth $tw(G)$ of a graph $G$ is the minimum value $k$ such that
$G$ is a partial \ktree.

Since the seminal work of Erd\"{o}s and R\'{e}nyi 
\cite{erdos60}, the theory of random graphs  has been an active research area in  graph theory and combinatorics. The probabilistic behavior of various graph properties such as the connectivity, the colorability, and the size of (connected) components, have been extensively 
studied. The theory of random graphs has also motivated the study 
of the probabilistic properties of random instances of other important combinatorial 
optimization problems, most notably that of the satisfiability of random logic formulas in
conjunctive normal form (CNF).    

We use $\rg{n, m}$ to denote an Erd\"{o}s-R\'{e}nyi random graph \cite{bollobas01} on 
$n$ vertices with $m$ edges selected from the $N = \binom{n}{2}$ possible edges 
uniformly at random and without replacement.  Throughout this paper by ``with high probability",
abbreviated as \whp,  we mean that the probability of the event under consideration
is $1 - o(1)$ as $n$ goes to infinity.   

We will be working with a random graph model $\overline{G}(n, m)$ that is slightly different
from $\rg{n, m}$ in that the $m$ edges are selected independently and uniformly
at random, \textbf{but with replacement}. 
There is a one-to-one correspondence between the random graph $\overline{G}(n, m)$ and the 
product probability space 
$(\overline{\Omega}, \mathcal{A}, \probab{\cdot}{\overline{G}(n, m)})$
defined as follows:
\begin{enumerate} 
\item $\overline{\Omega} = \prod\limits_{i = 1}^{m}\mathcal{E}_i$ where each 
$\mathcal{E}_i$ is the set of all $\binom{n}{2}$ possible edges. This is a finite and discrete  sample space.
\item $\mathcal{A}$ is the $\sigma$-field consisting of all subsets of $\overline{\Omega}$.  
\item The probability measure  $\probab{\cdot}{\overline{G}(n, m)}$ is 
\begin{equation}
\probab{\omega}{\overline{G}(n, m)} = \left(\frac{1}{\binom{n}{2}}\right)^{m}, \ \ 
  \forall \omega\in \overline{\Omega}. 
\end{equation}
\end{enumerate}
A sample point $\omega \in \overline{\Omega}$ is interpreted as an outcome of the 
random experiment that selects $m$ edges independently, uniformly at random with replacement
from the set of all possible edges. Note that the graph corresponding to a sample point $\omega \in \overline{\Omega}$  is actually a multi-graph, i.e., a graph in which parallel edges are allowed.    

It turns out that as far as the property of having a treewidth linear in the number of vertices is concerned, the two random graph models $\overline{G}(n, m)$  and 
$\rg{n, m}$ are equivalent. In fact, the equivalence holds for any monotone increasing combinatorial property in random discrete structures,    
as has been observed in \cite{kirousis94threshold,achlioptas99thesis} and formally proved in \cite{kirousis96tech}. For completeness, we will include in Appendix~\ref{appendix-1} an alternative pure measure-theory style proof.   
\begin{proposition}
\label{prop-equiv}
If there exists a constant $\beta > 0$ such that
$$
\lim\limits_{n\rightarrow \infty}\probab{tw(\overline{G}(n, m)) \geq \beta n}{} = 1,
$$
then
$$
\lim\limits_{n\rightarrow \infty}\probab{tw(\rg{n, m}) \geq \beta n}{} = 1.
$$ 
\end{proposition}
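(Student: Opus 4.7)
The plan is to combine two observations: the treewidth of a multigraph equals that of its underlying simple graph (parallel edges are invisible to tree decompositions and $k$-tree embeddings), and having treewidth at least $\beta n$ is a monotone increasing graph property. Together these let me transfer the bound from $\overline{G}(n,m)$ to $\rg{n,m}$ by conditioning on the number of distinct sampled edges.

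First I would let $m'$ denote the (random) number of distinct edges among the $m$ i.i.d.\ uniform samples that define $\overline{G}(n,m)$; clearly $m' \leq m$ with probability one. Next I would show that, conditional on $\{m' = k\}$, the set of distinct edges is uniform over the $k$-subsets of the $N = \binom{n}{2}$ possible edges. The justification is a one-line symmetry: for any fixed $k$-subset $S$ of edges, the number of length-$m$ sample sequences whose range is exactly $S$ equals the number of surjections from $[m]$ onto $S$, a quantity depending only on $|S|$. Consequently, the simple graph underlying $\overline{G}(n,m)$, conditioned on $\{m'=k\}$, is distributed as $\rg{n,k}$.

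Next, setting $q_k := \probab{tw(\rg{n,k}) \geq \beta n}{}$, the monotonicity of treewidth under edge addition (witnessed by the standard coupling that reveals the $N$ potential edges in a uniformly random order) gives $q_0 \leq q_1 \leq \cdots \leq q_N$. Conditioning on $m'$ in the hypothesis then yields
\[
\probab{tw(\overline{G}(n,m)) \geq \beta n}{} \;=\; \sum_{k=0}^{m} \probab{m'=k}{} \, q_k \;=\; \expectation{q_{m'}}{}.
\]
Since $m' \leq m$ almost surely and $q_k$ is non-decreasing in $k$, $q_{m'} \leq q_m$ almost surely, and so $q_m \geq \expectation{q_{m'}}{}$. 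The right-hand side tends to $1$ by hypothesis, forcing $q_m \to 1$, which is the desired conclusion.

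I do not expect any real obstacle. The combinatorial identification of $\overline{G}(n,m)$ given $\{m'=k\}$ with $\rg{n,k}$ is elementary, and both the monotonicity of $q_k$ and the domination $q_{m'} \leq q_m$ are immediate. The only point I would state explicitly is that the treewidth of the multigraph $\overline{G}(n,m)$ agrees with that of its simple underlying graph, since this is what allows the conclusion about $\rg{n,k}$ to speak about the same parameter that appears in the hypothesis about $\overline{G}(n,m)$.
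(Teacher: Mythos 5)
Your proof is correct and follows essentially the same route as the paper's: both condition on the number of distinct edges, identify the conditional law of the underlying simple graph with $\rg{n,k}$ via a symmetry/bijection argument, note that parallel edges do not change treewidth, and finish with monotonicity of the property in the number of edges. Your phrasing in terms of $q_{m'} \leq q_m$ is just a more compact probabilistic rendering of the paper's measure-theoretic set decomposition.
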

Due to Proposition~\ref{prop-equiv}, we will continue to use the notation $G(n, m)$ 
instead of $\overline{G}(n, m)$ throughout this paper, but with the understanding that the $m$ edges are selected independently and uniformly at random with replacement.

In the rest of the paper, we will always subscript operations such as 
a probability measure $\probab{\cdot}{G(n, m)}$ and a mathematical 
expectation $\expectation{\cdot}{G(n, m)}$ to clear indicate the underlying probability space in which these operations are applied. 
 
In \cite{kloks94}, it proved that the treewidth of an Erd\"{o}s-R\'{e}nyi random 
graph $\rg{n, m}$ is linear in the number of vertices \whp~  if the edge/vertex ratio is
greater than $1.18$.  It is mentioned
in \cite{kloks94} that it was unclear whether the lower bound 1.18 can 
be further improved, and that
the treewidth of a random graph $G(n, m)$ with $\frac{1}{2}< \frac{m}{n} < 1$
is unknown \cite{kloks94}.  The main result of this paper improves
the bound to 1.073. 

\subsection{Random Intersection Graphs}
The intersection model for random graphs was introduced by Karon\'{n}ski, et al. \cite{karonski99cpc}. As one of the motivations, Karon\'{n}ski, et al. discussed 
the application of this model in the average-case analysis of algorithmic problems
in gate matrix circuit design \cite{karonski99cpc}.  Other motivations for the recent interests in random intersection graphs include the possible applications in modeling 
large-scale complex networks arising in wireless communications \cite{nikoletseas08tcs} 
and social networks. 

A random intersection graph $\rig{n, m, p}$ over a vertex set $V$ is defined  
by a universe $M$ and  three parameters $n$ (the number of vertices), 
$m = |M|$, and 
$0 \leq p \leq 1$. Associated with a vertex $v\in V$ is a random subset $S_v \subset M$ formed 
by selecting each element in $M$ independently with probability $p$. A pair of vertices $u$
and $v$ is an edge in $\rig{n, m, p}$ if and only if $S_u \cap S_v \not= \emptyset$.   

An alternative view of $\rig{n, p, m}$ is as follows. Let $(C_1, \cdots, C_m)$ be a set of 
$m$ subsets of vertices. Each $C_i$ is formed independently by including each vertex independently with probability $p$. A pair of vertices $u$ and $v$ is an edge in 
$\rig{n, m, p}$  if and only if some $C_i$ contains both $u$ and $v$. In this sense, a random intersection graph is actually the primal graph of a random hypergraph consisting of 
$m$ hyperedges each of which contains a vertex with probability $p$. 
 
\subsection{Barab\'{a}si-Albert Random Graphs} 
In recent years, there has been growing interests in random models for large-scale communication networks, biological networks, and social networks. A remarkable observation is that 
the degree distribution of these large-scale networks follow a power law, i.e., the fraction
of vertices of a given degree $d$ is proportional to $d^{-\gamma}$ for some constant
$\gamma > 0$. 

The Barab\'{a}si-Albert model for random graphs is proposed in  \cite{albert02complex}          
and has been shown to have a power law degree distribution \cite{bollobas01scalefree}. In addition to the purpose of modeling, it is also hoped that features such as a power-law degree distribution may be exploited  algorithmically and/or mathematically 
to help solve real-world problems defined on these large-scale networks. See, for example, the work and arguments in \cite{cooper05,ferrante08tcs,silvio09stoc, gao09tcs}.     

Following the formal definition given in \cite{bollobas01scalefree},
a Barab\'{a}si-Albert random graph  $\rsg{n, m}$ on a set of $n$ vertices 
$V = \{v_1, \cdots, v_n\}$ is defined by a graph evolution process in which vertices are added to the graph one at a time. In each step, the newly-added vertex is connected to
$m$ existing vertices selected according to 
the \textit{preferential attachment} mechanism, i.e. an existing vertex is selected with probability in proportion to its degree. To be more precisely, let $v_i$ be the vertex to be added and let $G_{i - 1}$ be the graph obtained after vertex $v_{i - 1}$ is added. 
The $m$ neighbors of $v_i$ are selected in $m$ steps. In step $1\leq j\leq m$,  
the probability that 
an existing vertex $w$ is selected as the neighbor of the new vertex $v$ is 
\begin{equation}
\label{eq-preferential-prob}
  \frac{\degree{G_{i - 1}}{w} + d_{w}(j) }{m (i - 1) + 2(j - 1)},
\end{equation}          
where 
\begin{enumerate}
\item $(i - 1)m = \sum\limits_{k \leq i - 1} \degree{G_{i - 1}}{v_k}$ is the total degree
of $G_{i - 1}$,
\item $d_{w}(j)$ is the number of times $w$ has been picked as the neighbor of $v$ in the first $(j - 1)$ trials, and 
\item the term $2(j - 1)$ takes into consideration the increase of the total degree as a result of the first $j - 1$ neighbors. 
\end{enumerate}
One also needs to take care of the initial phase, but that won't have any impact on our analyses. 

\section{Treewidth of Erd\"{o}s-R\'{e}nyi Random Graphs: Proof of Theorem \ref{thm-treewidth-bound}}
\label{sec-thm-1}
In this section, we prove Theorem~\ref{thm-treewidth-bound} to establish 
the lower bound $c^*$ on the edge/vertex ratio $\frac{m}{n}$ such that  
whenever $\frac{m}{n} \geq c^*$,  
the treewidth of an  Erd\"{o}s-R\'{e}nyi random graph $\rg{n, m}$ 
is \whp~greater than $\beta n$ for some constant $\beta > 0$.
To begin with, we introduce the following concept which will be used as a necessary condition 
for a graph to have a treewidth of certain size. The following notion of
balanced $l$-partition was used in \cite{kloks94} to establish the lower bound 1.18.
\begin{definition}[\cite{kloks94}]
\label{rigid_partition_def}
Let $G(V, E)$ be a graph with $|V| = n$. 
Let $\mathbf{W} = (S, A, B)$ be a triple of disjoint vertex subsets such that 
$V = S \cup A\cup B$ and $|S| = l + 1$.  

We say that $\mathbf{W}$ is balanced if 
$\frac{1}{3}(n-l-1)\leq |A|, |B| \leq \frac{2}{3}(n - l - 1)$. 
Without lose of generality, we will always assume that $|B| \geq |A|$.

We say that $\mathbf{W}$ is an $l$-partition if 
$S$ separates $A$ and $B$, i.e., there are no edges between vertices of $A$ and vertices of $B$. 
\end{definition}

The following notion of a $d$-rigid partition plays an important role in establishing
our improved lower bound: 
\begin{definition}
Let $d > 0$ be an integer. A triple $\mathbf{W} = (S, A, B)$ with $|B| > |A| + d$ is said to be $d$-rigid if there is no subset of vertices $U \subset B$ with $|U| \leq d$ that induces
a connected component of $G[B]$.
\end{definition}

A $d$-rigid and balanced $l$-partition generalizes Kloks's balanced $l$-partition by
requiring that any vertex set of size at most $d$ in the larger subset of a partition cannot be
moved to the other subset of the partition, and hence the word ``rigid". As we will have to consider all the vertex sets of size at most $d$ to get the best possible estimation, 
the requirement of connectivity is a kind of ``maximality" condition to avoid repeated counting 
of vertex sets of different sizes. For the case of 
$d = 1$, being $d$-rigid means that $G[B]$ has no isolated vertices.

We note that the idea of imposing various restrictions on the combinatorial objects  under consideration has been used in recent years to increase the power of 
the first moment method when dealing with combinatorial problems in 
discrete random structures such as the satisfiability of 
random CNF formulas \cite{kirousis94threshold,kirousis09threshold} and 
the colorability of random graphs \cite{achlioptas99thesis}.      
Our result is a further illustration of the power of this idea in the context 
of treewidth of random graphs.

\begin{lemma}
\label{treewidth-partition-lem}
Let $d \geq 1$  be an integer.
Any graph with a treewidth at most $l > 4$ must have a balanced $l$-partition 
$\mathbf{W} = (S, A, B)$ such that either $|B| \leq |A| + d$ or
$\mathbf{W}$ is $d$-rigid. 
\end{lemma}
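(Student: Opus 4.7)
The plan is to invoke Kloks's existence theorem for balanced $l$-partitions as a starting point, then apply an extremal argument to upgrade any such partition into one satisfying one of the two alternatives in the statement. By the treewidth characterization used in \cite{kloks94}, any graph of treewidth at most $l$ admits at least one balanced $l$-partition in the sense of \defref{rigid_partition_def}, so the collection $\mathcal{P}$ of balanced $l$-partitions $(S, A, B)$ with the convention $|A| \leq |B|$ is non-empty. From $\mathcal{P}$ I would pick a partition $\mathbf{W}^{*} = (S, A, B)$ that maximizes $|A|$ (ties broken arbitrarily), and claim that $\mathbf{W}^{*}$ itself witnesses the lemma.

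The argument then splits into two cases on $\mathbf{W}^{*}$. If $|B| \leq |A| + d$, the first alternative holds and we are done. Otherwise $|B| > |A| + d$, and I would argue that $\mathbf{W}^{*}$ must be $d$-rigid. Suppose for contradiction that some $U \subset B$ with $|U| \leq d$ induces a component of $G[B]$. Since $U$ has no edges to $B \setminus U$, the triple $(S, A \cup U, B \setminus U)$ is still an $l$-partition. A short calculation using $|A| + |B| = n - l - 1$ and $|B| > |A| + d$ shows that the new partition remains balanced provided $n - l - 1 \geq 3d$, a mild size condition that is automatic in the regime we care about (with $d$ a fixed constant and $l$ at most a small linear fraction of $n$).

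To close the contradiction, I would observe that the two sides of the new partition have sizes $|A| + |U|$ and $|B| - |U|$, and both strictly exceed $|A|$: the first because $|U| \geq 1$, the second because $|B| - |U| > (|A| + d) - d = |A|$. Hence the smaller side of the new balanced $l$-partition is strictly larger than $|A|$, contradicting the maximality of $|A|$ in the choice of $\mathbf{W}^{*}$. A minor bookkeeping issue is that after the move we may have $|A \cup U| > |B \setminus U|$, so the convention $|A| \leq |B|$ is momentarily violated; this is resolved by simply relabeling the two sides, and the extremality contradiction is untouched. I do not anticipate a serious obstacle: the heavy lifting (existence of some balanced $l$-partition whenever $tw(G) \leq l$) is borrowed wholesale from \cite{kloks94}, and the rest is a one-step ``move a small component to rebalance'' exchange argument whose only non-trivial check is the easy size condition $n - l - 1 \geq 3d$.
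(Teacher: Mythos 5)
Your proposal is correct and follows essentially the same route as the paper: both start from Kloks's existence of a balanced $l$-partition and then move a small connected component $U$ of $G[B]$ across the separator, with your extremal choice (maximizing the smaller side $|A|$) merely repackaging the paper's ``repeat until $|B| \le |A| + d$ or the partition is $d$-rigid'' iteration as a one-step contradiction. The only quibble is that your side condition $n - l - 1 \ge 3d$ is not actually needed: since $|A| < |A| + |U| \le |A| + d < |B|$ and $|A| < |B| - |U| < |B|$, both new parts automatically remain in the balanced range $[\frac{1}{3}(n-l-1), \frac{2}{3}(n-l-1)]$, so the lemma holds as stated without any size hypothesis.
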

\begin{proof}
From \cite{kloks94}, any graph with treewidth at most $l > 4$ must have
a balanced $l$-partition $\mathbf{W} = (S, A, B)$. If $|B| \leq |A| + d$, we are done. Otherwise, if the triple $\mathbf{W}$ is not $d$-rigid, then there must be a vertex subset $U\subset B$ that induces a component of $G[B]$ and consequently 
$$
  N(U) \cap (B\setminus U)  = \emptyset.
$$
Therefore, we can move $U$ from $B$ to $A$ and create a new balanced $l$-partition
with the size of $B$ decreased by $|U|$.
Continuing this process until either $|B| \leq |A| + d$ or 
the partition becomes $d$-rigid. 
\end{proof}

\subsection{Conditional Probability of a $d$-rigid and balanced $l$-partition}

We now bound the conditional probability that a balanced triple $\mathbf{W} = (S, A, B)$
with $|S| = l + 1$ and $|B| \geq |A| + d$ is $d$-rigid given that it is an 
$l$-partition of $\rg{n, m}$.
To facilitate the presentation,  we define the following function
\begin{eqnarray}
\label{eq-rgx-def}
x(t, c) &=& \frac{2ct}{2t^2 - 2t + 1},   \nonumber \\
g(t, c) &=& \sum\limits_{i = 2}^{d}\frac{i^{i - 2}}{i!}
            \left(x(t, c) e^{-x(t, c)}\right)^{i - 1},  \nonumber \\
r(t, c) &=& \frac{2t^2}{(1 + \epsilon)^{2} c} e^{-2x(t, c)}          
\end{eqnarray}
where $\epsilon = \frac{1}{d - 1}$.

\begin{theorem}
\label{thm-rigid-conditional}
Let $G(n, m), c = \frac{m}{n},$ be a random graph and let $\mathbf{W} = (S, A, B)$
be a balanced triple such that $|S| = l + 1, |A| = a, \mbox{ and } |B| = b = tn$. Let $d > 0$
be a constant integer less than $l + 1$.  
Then for $n$ sufficiently large,
\begin{equation}
\label{eq-rigid-probab}
\probab{\mathbf{W} \mbox{ is d-rigid } |\, \mathbf{W} \mbox{ is an l-partition}}{\rg{n, m}}
 \leq \left(\frac{1}{e}\right)^{r(1 + g)^2n}
\end{equation}
where $\epsilon = \frac{1}{d - 1}$,
$$
r = r(t) = \frac{2t^{2}}{(1 + \epsilon)^2c} \left(\frac{1}{e}\right)^{\frac{4ct}{1 -
            2t(1-t)}}
$$
and 
$$
g = g(t) =  \sum\limits_{i = 2}^{d}\frac{i^{i - 2}c^{i - 1}}{i!} 
          e^{-\frac{2(i-1)ct}{2t^2 + 2t - 1}}.
$$

\end{theorem}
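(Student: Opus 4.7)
I would bound this conditional probability by showing that the number of small tree components of $G[B]$ is exponentially unlikely to be zero, via a first-moment calculation combined with a Hoeffding--Azuma concentration inequality applied to a carefully weighted count.

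\textit{Step 1 (conditional model).} First I would observe that, after conditioning on the event that $\mathbf{W}$ is an $l$-partition---i.e.\ none of the $m$ trials selects a pair between $A$ and $B$---the trials remain i.i.d.\ and uniform, but now over the reduced pool of $N' = \binom{n}{2} - ab$ allowed pairs. The induced subgraph $G[B]$ is then a random multigraph whose edges can be exposed one trial at a time, enabling an edge-exposure martingale.

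\textit{Step 2 (first moment).} For $1 \leq i \leq d$, let $Y_i$ count the $i$-element subsets $U \subset B$ that induce a tree component of $G[B]$. Using Cayley's formula ($i^{i-2}$ labeled trees on $i$ vertices) together with a Poisson-type estimate for the joint event that $i-1$ prescribed tree edges are present inside $U$ while $U$ is isolated from $B\setminus U$, I would establish
\[
\mathbb{E}[Y_i] \sim tn\, e^{-x(t,c)} \cdot \frac{i^{i-2}}{i!}\bigl(x(t,c)\, e^{-x(t,c)}\bigr)^{i-1}.
\]
A suitably weighted sum $Y = \sum_{i=1}^{d} w_i Y_i$ then satisfies $\mathbb{E}[Y] \geq (1-o(1))\, tn\, e^{-x}(1+g)$, matching the first factor $(1+g)$ in the target exponent.

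\textit{Step 3 (edge-exposure martingale and Lipschitz bound).} Let $Z_t = \mathbb{E}[Y \mid X_1,\ldots,X_t]$, where $X_t$ records the pair chosen at trial $t$. A single edge change touches only tree components incident to its two endpoints, so at most three small tree components are affected. Choosing the weights $w_i$ as a mildly decreasing function of the component size, calibrated by $\epsilon = 1/(d-1)$, makes the gain from a newly merged tree component almost cancel the loss from the two pre-existing ones, yielding the Lipschitz bound $|Z_t - Z_{t-1}| \leq L := (1+\epsilon)/2$.

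\textit{Step 4 (applying Hoeffding--Azuma).} With $m = cn$ trials and the Lipschitz constant $(1+\epsilon)/2$, Hoeffding--Azuma gives
\[
\mathbb{P}[Y = 0] \leq \exp\!\left(-\frac{\mathbb{E}[Y]^2}{2mL^2}\right) \leq \exp\!\left(-\frac{2t^2 e^{-2x}(1+g)^2}{(1+\epsilon)^2 c}\,n\right) = \exp(-r(1+g)^2 n).
\]
Since $\mathbf{W}$ being $d$-rigid forces $Y = 0$, the same bound applies to the conditional probability in the statement.

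\textit{Main obstacle.} The crux is the Lipschitz bound in Step 3. A naive unit-weighting of tree components gives Lipschitz constant as large as $2$, because a single edge addition can destroy two small tree components at once. Compressing this to $(1+\epsilon)/2$ requires designing the weights on components of different sizes so that creating a merged tree component almost offsets losing its two smaller halves, leaving only an $O(\epsilon)$ remainder. This ``weighted count'' idea---precisely the technical novelty flagged in the introduction---is what supplies the sharp $1/(1+\epsilon)^2$ factor in $r$.
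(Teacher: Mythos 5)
Your proposal follows essentially the same route as the paper: condition on the $l$-partition event to get an i.i.d.\ product space over the reduced edge pool, form a weighted count of tree components of size at most $d$ with weights calibrated by $\epsilon = 1/(d-1)$, lower-bound its expectation by $tne^{-x}(1+g)$ via Cayley's formula, and apply an Azuma/McDiarmid-type concentration bound to conclude that the count is zero with probability at most $e^{-r(1+g)^2 n}$. The structure, the choice of weighted count, and the final exponent all match.

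There is, however, one step whose justification as written would not survive scrutiny: the claim in Step 3 that the Doob martingale increments satisfy $|Z_t - Z_{t-1}| \leq (1+\epsilon)/2$. What the weighting actually buys (and what the paper proves) is that the \emph{function} $Y$ has bounded differences $c_f = 1+\epsilon$ under a single-coordinate change; for a Doob martingale of such a function one gets that the conditional distribution of $Z_t$ given the past lies in an interval of length $c_f$, not that $|Z_t - Z_{t-1}| \leq c_f/2$ (the conditional mean need not sit at the centre of that interval, so the increment can approach $c_f$). Your factor-of-two improvement is therefore unjustified in general; if you instead use the honest increment bound $1+\epsilon$ with plain Azuma, the exponent degrades by a factor of $4$ and no longer matches $r(1+g)^2 n$. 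The fix is simply to invoke the bounded-difference (McDiarmid) form of the inequality, $\probab{f \leq \expectation{f}{} - t}{} \leq e^{-2t^2/(c_f^2 m)}$, with $c_f = 1+\epsilon$, which is exactly what the paper does and which recovers the same numerical bound you wrote down. A second, minor point: your explanation of what the weights accomplish is inverted. The problematic case for the unweighted count is an edge \emph{removal} that splits a tree component of size greater than $d$ into two components of size at most $d$, which jumps the count by $2$; the weights $1-(i-1)\epsilon$ cap this gain at $1+\epsilon$ (and at $1$ when $i+j>d$), whereas the merge direction already loses at most one weighted component's worth even without the weighting.
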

\begin{proof} 
Conditional on that $\mathbf{W}$ is an $l$-partition of $\rg{n, m}$, each of the
$m$ edges can only be selected from the set of edges
$$
   E_{W} = V^{2} \setminus \{(u, v): u \in A, v \in B\},
$$
where $V^{2}$ denotes the set of unordered pair of vertices. Let $s$ be the size of 
$E_W$, we have
\begin{eqnarray*}
  s &=& |E_{W}| = \frac{n(n-1)}{2} - ba \\
    &=& \frac{n(n - 1)}{2} - tn(n - tn - (l + 1)). 
\end{eqnarray*}

In the rest of the proof, we will work on the conditional probability space
$\mathcal{P} = (\Omega, \probab{\cdot}{\mathcal{P}})$ where
$\Omega = \Omega_1 \times \Omega_2 \times \cdots \times \Omega_{m}$ and 
$\Omega_i = E_{W}$ for each $1\leq i \leq m$. A sample point 
$\omega = (\omega_1, \cdots, \omega_{m}) \in \Omega$ corresponds to an outcome 
of selecting $m$ edges from $E_W$ uniformly at random and with replacement such that
$\mathbf{W}$ is a balanced $l$-partition of the graph determined by $\omega$. 
The probability measure $\probab{\cdot}{\mathcal{P}}$ is
$$
\probab{\omega}{\mathcal{P}} = \left(\frac{1}{s}\right)^{m}.
$$    
The following lemma guarantees that we can obtain Equation~(\ref{eq-rigid-probab})
by studying the probability $\probab{\mathbf{W} \textrm{ is d-rigid}}{\mathcal{P}}$:  

\begin{lemma}
\label{lem-equiv-condi}
$$
\probab{\mathbf{W} \textrm{ is d-rigid }~ |~ \mathbf{W} \textrm{ is an l-partition}}
  {\rg{n, m}} = \probab{\mathbf{W} \textrm{ is d-rigid}}{\mathcal{P}}.
$$
\end{lemma}
\begin{proof}
Recall that $\probab{\cdot}{\rg{n, m}}$ is the probability measure for the probability 
space $(\overline{\Omega}, \probab{\cdot}{\rg{n, m}})$ and 
$\probab{\cdot}{\mathcal{P}}$ is the probability measure 
for the probability space $\mathcal{P} = (\Omega, \probab{\cdot}{\mathcal{P}})$.
Note that $\Omega$ is the set of sample points $\omega$ in $\overline{\Omega}$ such that 
$\textbf{W}$ is an $l$-partition in the graph determined by $\omega$.
Let $Q \subset \overline{\Omega}$ be the set of sample points $\omega$ such that  
$\mathbf{W}$ is $d$-rigid in the graph determined by $\omega$.
We have 
\begin{eqnarray}
&&\probab{\mathbf{W} \textrm{ is d-rigid }~ |~ \mathbf{W} \textrm{ is an l-partition}}
   {\rg{n, m}} \nonumber \\ 
&&\ \ \ \ \ = 
      \frac{\sum\limits_{\omega \in Q \cap \Omega}\probab{\omega}{\rg{n,m}}}
           {\sum\limits_{\omega \in \Omega}\probab{\omega}{\rg{n,m}}} 
       \ \ \  (\textrm{definition of conditional probability})              
           \nonumber \\ 
&&\ \ \ \ \ = \frac{|Q \cap \Omega|}{|\Omega|}  
 =\frac{|Q \cap \Omega|}{s^m} 
   \nonumber \\
&&\ \ \ \ \ = \sum\limits_{\omega \in Q\cap \Omega}\probab{\omega}{\mathcal{P}} 
     \ \ \  (\textrm{definition of the two probability spaces}) \nonumber \\
&&\ \ \ \ \ = \probab{\mathbf{W} \textrm{ is d-rigid}}{\mathcal{P}}. \nonumber 
\end{eqnarray}
This proves the lemma. 
\end{proof}

Continuing the proof of Theorem~\ref{thm-rigid-conditional}, we need to bound 
$\probab{\mathbf{W} \textrm{ is d-rigid}}{\mathcal{P}}$. 
To make thing simpler, we will 
bound the probability that there exist tree components, instead of general connected components, of size at most $d$ in the subgraph of $\rg{n, m}$ induced on the vertex set $B$.
We use the following variate of Hoeffding-Azuma inequality: 
\begin{lemma}[Lemma 1.2 \cite{mcdiarmid89} and Theorem 1.19 \cite{bollobas01}]
\label{lem-bounded-diff}
Let $\Omega = \prod\limits_{i=1}^{m}\Omega_i$ be a independent 
product probability space where each $\Omega_i$ is a finite set, and $f:\Omega \rightarrow R$
be a random variable satisfying the following Lipschitz condition
\begin{equation}
\label{eq-bounded-diff}
|f(\omega) - f(\omega')| \leq c_f
\end{equation}
if $\omega, \omega' \in \Omega$ differs only in one coordinate. Then, for any $t > 0$,
\begin{equation}
\probab{f(\omega) \leq \expectation{f(\omega)}{} - t }{}
  \leq e^{-\frac{2t^2}{c_f^2m}}. \nonumber
\end{equation}   
\end{lemma} 

In our case, the probability space is $\mathcal{P} = (\Omega, \probab{\cdot}{\mathcal{P}})$
and we may use any the function $f:\Omega \rightarrow R$ such that  the total number 
of tree components of size at most $d$ is larger than zero whenever $f > 0$. To achieve  
the best possible Lipschitz constant $c_f $ in Equation (\ref{eq-bounded-diff}), we consider
a weighted sum $I$ of all tree components of size at most $d$ defined as follows.      

For any $1\leq i\leq d$,  let $\mathcal{U}_i = 
\{U \subset B: |U| = i\}$ be the collection of size-$i$ vertex sets in $B$ 
and let 
$$
\mathcal{U} = \bigcup\limits_{i = 1}^{d} \mathcal{U}_i.
$$ 
For a vertex set $U \in \mathcal{U}$, we use $I_{U}$ to denote the indicator function of the event that  $G[U]$ is a tree component of $G[B]$, i.e., $G[U]$ is a tree and
$N(U) \cap (B \setminus U) = \emptyset$. Define 
\begin{equation}
I = \sum\limits_{U \in \mathcal{U}} \left(1 - (|U| - 1)\epsilon\right)I_{U}
  = \sum\limits_{i = 1}^{d}\sum\limits_{U \in \mathcal{U}_i}
         (1 - (i - 1)\epsilon)I_{U}
\end{equation}  
where $\epsilon = \frac{1}{d - 1}$. The idea is that instead of counting the total number of tree components of size at most $d$, we use the random  variable $I$ as  a ``weighted count" 
to which the contribution of a tree component on a vertex set of size $i$ is $(1 - (i - 1)\epsilon)$. Note that the constant $\epsilon$ can be made arbitrarily small by taking an arbitrarily large (but constant) $d$. The purpose is to make $|I(\omega) - I(\omega')|$  
as close to 1 as possible for every pair $\omega$ and $\omega'$ that differs only on 
one coordinate. 

It is obvious that $I > 0$ if and only if  that the total number of 
tree components of size at most $d$
is greater than zero. By the definition of a $d$-rigid triple, we have
$$
 \probab{\mathbf{W} \textrm{ is d-rigid}}{\mathcal{P}}
   \leq \probab{I = 0}{\mathcal{P}}.
$$  
By Lemma~\ref{lem-equiv-condi} and Lemma~\ref{lem-bounded-diff}, we have 
\begin{eqnarray}
\label{eq-bounded-difference}
&&\probab{\mathbf{W} \mbox{ is d-rigid } |\, \mathbf{W} \mbox{ is an l-partition}}{\rg{n,m}}
       \nonumber \\
&&\ \ \ = \probab{ I = 0 ~|\, \mathbf{W} \mbox{ is an l-partition }}{\mathcal{P}} \nonumber \\
&&\ \ \ \leq \probab{ I - \expectation{I}{\mathcal{P}} \leq 
    - \expectation{I}{\mathcal{P}}} {\mathcal{P}} \nonumber \\
&&\ \ \ \leq \left(\frac{1}{e}\right)^{\frac{2\mathcal{E}^{2}[I]}{c_f^2cn}}
\end{eqnarray}
where $c_f = \max |I(\omega) - I(\omega')|$ with the maximum taken over
all pairs of $\omega$ and $\omega'$ in $\Omega$  that differ only on one coordinate.
The following lemma bounds $\max |I(\omega) - I(\omega')|$. 
(Note that if we had used the unweighted sum $I = \sum\limits_{U \in \mathcal{U}}I_{U}$, 
the best we can have is  $\max |I(\omega) - I(\omega')| \leq 2$.)
\begin{lemma}
\label{lem-max-diff}
For any $\omega, \omega' \in \Omega$ that differ only in one coordinate,  
\begin{equation}
 |I(\omega) - I(\omega')| \leq 1 + \epsilon. \nonumber
\end{equation}
\end{lemma}
\begin{proof}
Note that $\omega$ and $\omega'$ represent two possible outcomes of the $m$ independent random experiments that select the $m$ edges of a random graph.   
If $\omega, \omega' \in \Omega$ differ only in one coordinate, say the $i$-th coordinate, 
then the edge sets of the corresponding graphs $G_{\omega}$ and $G_{\omega'}$ only differ
in the $i$-th edge.   

Let us 
consider the change of the value of $I$ when we modify $G_{\omega}$ to $G_{\omega'}$
by removing the $i$-th edge of $G_{\omega}$ and adding the $i$-th edge of $G_{\omega'}$.
First, removing the $i$-th edge can only increase $I$ by $\delta^+(I)$. The maximum increase
occurs situations where a tree component $T$ is broken up into two smaller
tree components $T_1$ and $T_2$. Suppose that there are $i$ vertices in $T_1$ and $j$ vertices 
in $T_2$, we have 
$$
\delta^{+}_{I} = (1 - (i - 1)\epsilon) + (1 - (j - 1)\epsilon) - 
  (1 - (i + j - 1)\epsilon)I_{i + j \leq d}.
$$
where $i_{i + j \leq d} = 1$ if $i + j \leq d$ and $I_{i + j \leq d} = 0$ otherwise. 
If $i + j \leq d$, we have
$$
\delta^{+}_{I} = (1 - (i - 1)\epsilon) + (1 - (j - 1)\epsilon) - 
  (1 - (i + j - 1)\epsilon) = (1 + \epsilon).
$$ 
If $i + j > d$, we have (since $\epsilon = frac{1}{d - 1}$)
$$
\delta^{+}_{I} = 2 - (i + j - 2)\epsilon  = 2 - (i + j -1)\epsilon + \epsilon < 1 + \epsilon.
$$     
Secondly, adding the $i$-th edge can only decrease $I$ by $\delta^{-}_{I}$. The maximum
decrease occurs in situations where two tree components are merged into a larger one, and
$\delta^{-}_{I} \leq 1 + \epsilon$ as well. 

Therefore, the maximum net change of $I$ is $(1 + \epsilon)$ and is achieved when 
$\delta^{+}_{I} = 1 + \epsilon$ and $\delta^{-}_{I} = 0$, or    
$\delta^{+}_{I} = 0$ and $\delta^{-}_{I} = - (1 + \epsilon)$. Consequently, 
$$
|I(\omega) - I(\omega')| \leq 1 + \epsilon.
$$
The proves the lemma.
\end{proof}

To complete the proof of Theorem~\ref{thm-rigid-conditional}, we estimate in the following 
lemma the expected number of tree components $\expectation{I}{\mathcal{P}}$.   
\begin{lemma}
\label{lem-expected-tree}
Let $I = I(\omega)$ be the number of tree components on at most $d$ vertices in $G[B]$. 
We have 
\begin{equation}
\label{eq-expected-tree}
\expectation{I}{\mathcal{P}} \geq~ 
  te^{-x(t, c)}\left(1 + \sum\limits_{i = 2}^{d}\frac{i^{i - 2}}{i!}
     \left(x(t, c) e^{-x(t, c)}\right)^{i - 1}\right)n  
\end{equation} 
\end{lemma}
\begin{proof}
Let $U,\ |U| = i,$ be a vertex set in $\mathcal{U}_{i}$ and recall that in $\rg{n, m}$, the 
$m = cn$ edges are selected uniformly at random and with replacement. Conditional on the 
event that $\mathbf{W} = (S, A, B)$ is a balanced $l$-partition, the $m$ edges are
selected from the set $E_{W}$ uniformly at random with replacement. Therefore for 
$i \geq 2$, the probability that $G[U]$ is an induced tree component in $G[B]$ is
\begin{eqnarray} 
\probab{I_U = 1}{\mathcal{P}} &=& 
   \binom{cn}{ i - 1} i^{i - 2} \left(\frac{i - 1}{s} \frac{i - 2}{s}
      \cdots \frac{1}{s}\right)  \left(1 - \frac{i (tn - i) +
      \binom{i}{2}}{s}\right)^{cn - i + 1} \nonumber \\
   &=&  c^{i - 1}n^{i - 1}i^{i - 2}\left(\frac{1}{s}\right)^{i - 1}  
       \left(1 - \frac{i (tn - i) + 
      \binom{i}{2}}{s}\right)^{cn - i + 1} \nonumber
\end{eqnarray}
For the case of $|U| = 1$, the probability $\probab{I_U = 1}{\mathcal{P}}$ is the 
probability that the single vertex in $U$ is isolated in $G[B]$, and thus
\begin{equation}
\probab{I_U = 1}{\mathcal{P}} =   
       \left(1 - \frac{(tn - 1)}{s}\right)^{cn}. \nonumber
\end{equation}
Since there are $\binom{tn}{i}$ vertex subsets of size $i$ in $B$, the expected 
number of tree components in $G[B]$ on at most $d$ vertices is
\begin{eqnarray}
\expectation{I}{\mathcal{P}} &=& \sum\limits_{U\in \mathcal{U}} 
      \probab{I_{U} = 1}{\mathcal{P}} \nonumber \\
  &=&  tn \left(1 - \frac{(tn - 1)}{s}\right)^{cn} + \nonumber \\
  &&\ \ \ \ \ \    \sum\limits_{i = 2}^{d}\binom{tn}{i}  
      c^{i - 1}n^{i - 1}i^{i - 2}\left(\frac{1}{s}\right)^{i - 1}  
       \left(1 - \frac{i (tn - i) +  \binom{i}{2}}{s}\right)^{cn - i + 1} \nonumber        
\end{eqnarray}
Since $s = \frac{n(n - 1)}{2} - tn(n - tn - (l + 1)) = \frac{(1- 2t(1-t))n^2 + tn(l + 1) - n}{2}$,
we have that for sufficiently large $n$
\begin{eqnarray}
\expectation{I}{\mathcal{P}} &\geq& tn \left(
   e^{-\frac{2ct}{1 - 2t(1-t)}}
    + \sum\limits_{i = 2}^{d} \frac{t^{i - 1} i^{i - 2} 2^{i - 1}}{(2t^2 - 2t + 1)^{ i - 1} i!} 
     c^{i - 1} e^{-\frac{2ict}{2t^2 - 2t + 1}} \right) \nonumber \\
&=& te^{-x(t, c)}
        \left(1 + \sum\limits_{i = 2}^{d}\frac{i^{i - 2}}{i!} 
          \left(x(t, c)e^{-x(t, c)}\right)^{i - 1}\right)n.  \nonumber      
\end{eqnarray}
This proves Lemma~\ref{lem-expected-tree}.
\end{proof}

To complete the proof of Theorem~\ref{thm-rigid-conditional}, we see that
Equation~(\ref{eq-rigid-probab}) follows from 
Lemma~\ref{lem-max-diff}, Lemma~\ref{lem-expected-tree}, 
and Equation~(\ref{eq-bounded-difference}). 
\end{proof}

\subsection{Proof of Theorem \ref{thm-treewidth-bound}}
We prove Theorem~\ref{thm-treewidth-bound} by applying Markov's inequality and 
the upper bound obtained in Section 3.1 on the conditional probability
of a $d$-rigid and balanced $l$-partition. 

Let $l + 1 = \beta n$ where $\beta > 0$ is a sufficiently small number to be determined
at the end of the proof.  Let $J_1$ be the total number of balanced $\beta n$-partition
$\mathbf{W} = (S, A, B)$ such that $|A| \leq |B| \leq |A| + d$, and  
let $J_2$ be the total number of balanced $\beta n$-partition
$\mathbf{W} = (S, A, B)$ such that $|B| > |A| + d$ and $\mathbf{W}$ is $d$-rigid.    

By Lemma~\ref{treewidth-partition-lem}, if the treewidth of $\rg{n, m}$ is at most $\beta n$, then 
either $J_1 > 0$ or $J_2 > 0$. It follows that
\begin{equation}  
\label{eq-markov}
\probab{tw(\rg{n, m}) \leq \beta n}{\rg{n,m}}
  \leq \probab{J_1 + J_2 > 0 }{\rg{n,m}}. 
\end{equation} 
If we can show that $\expectation{J_1 + J_2}{\rg{n, m}}$ tends to zero  as $n$ goes to infinity, 
Theorem~\ref{thm-treewidth-bound} follows from Markov's inequality. 

Define
\begin{eqnarray}
  \phi_{1}(t) &=& \left(1 - 2t + 2t^{2} + 2t\beta + O(1/n)\right)^{c}, \nonumber \\
 \phi_{2}(t)  &=& \left(e^{-\frac{1}{c}r(t, c)(1 + g(t, c))^2}\right)^{c}, \nonumber \\
  \phi(t) &=& \phi_{1}(t)\phi_{2}(t) \nonumber
\end{eqnarray}
For the expectation of $J_1$, we have   
\begin{lemma}
\label{lem-j-1} 
For any $c > 1$, there is a constant $\beta_1^* > 0$ such that for any $\beta < \beta_1^*$,  
$\lim\limits_{n\rightarrow \infty}\expectation{J_1}{\rg{n,m}} = 0$.
\end{lemma}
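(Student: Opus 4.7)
The plan is to bound $\expectation{J_1}{\rg{n,m}}$ directly by linearity of expectation and then tune $\beta$ so that the resulting estimate decays exponentially in $n$. First, I enumerate the triples $\mathbf{W}=(S,A,B)$ being counted: there are $\binom{n}{\beta n}$ choices for $S$; once $S$ is fixed, $|A|+|B|=(1-\beta)n$ and the constraint $|B|-|A|\in\{0,1,\dots,d\}$ leaves only $d+1=O(1)$ admissible values of $|B|$, each of the form $(1-\beta)n/2+O(1)$, with at most $\binom{(1-\beta)n}{|B|}$ ways to split the remaining vertices into $A$ and $B$.

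Second, for a fixed triple with $|B|=tn$ and $|A|=(1-\beta-t)n$, the probability in $\rg{n,m}$ (using the with-replacement model justified by Proposition~\ref{prop-equiv}) that no edge falls between $A$ and $B$ is
\[
\left(1 - \frac{|A|\,|B|}{\binom{n}{2}}\right)^{cn} \;=\; \bigl(1 - 2t + 2t^{2} + 2t\beta + O(1/n)\bigr)^{cn} \;=\; \phi_{1}(t)^{n},
\]
which is exactly the $\phi_{1}$ defined just before the lemma. Summing over the $O(1)$ admissible values of $|B|$ (all satisfying $t=\tfrac{1-\beta}{2}+O(1/n)$) gives
\[
\expectation{J_1}{\rg{n,m}} \;\leq\; (d+1)\,\binom{n}{\beta n}\,\binom{(1-\beta)n}{\lfloor(1-\beta)n/2\rfloor}\,\phi_{1}\!\left(\tfrac{1-\beta}{2}\right)^{n}(1+o(1)).
\]

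Third, I apply Stirling (Lemma~\ref{lem-stirling}) to both binomials, yielding
\[
\expectation{J_1}{\rg{n,m}} \;\leq\; \mathrm{poly}(n)\cdot \Phi(\beta)^{n}, \qquad \Phi(\beta) \;:=\; \frac{2^{1-\beta}}{\beta^{\beta}(1-\beta)^{1-\beta}}\cdot \phi_{1}\!\left(\tfrac{1-\beta}{2}\right).
\]
A direct substitution gives $\phi_{1}\!\left(\tfrac{1-\beta}{2}\right) = \bigl(\tfrac{1}{2}+\beta-\tfrac{\beta^{2}}{2}\bigr)^{c}$, and combining with $\beta^{\beta}\to 1$, $(1-\beta)^{1-\beta}\to 1$ (see Lemma~\ref{lem-function-0}) and $2^{1-\beta}\to 2$ one obtains
\[
\lim_{\beta\to 0^{+}}\Phi(\beta) \;=\; 2\cdot 2^{-c} \;=\; 2^{\,1-c} \;<\; 1 \qquad \text{for every } c>1.
\]
By continuity of $\Phi$ on $[0,1)$, there exists $\beta_{1}^{*}>0$ such that $\Phi(\beta)<1$ for all $\beta<\beta_{1}^{*}$, and the lemma follows because $\mathrm{poly}(n)\cdot \Phi(\beta)^{n}\to 0$.

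The only potentially tedious step is the bookkeeping required to absorb the $d+1$ size choices and the $\theta/\sqrt{n}$ Stirling prefactor into a single polynomial correction, together with the minor rounding coming from writing $(1-\beta)n/2$ as an integer. None of this threatens the argument: because $2^{1-c}<1$ is a strict inequality for every $c>1$, the exponential factor $\Phi(\beta)^{n}$ dominates any fixed polynomial overhead once $\beta$ is chosen small enough.
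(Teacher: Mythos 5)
Your proof is correct and follows essentially the same route as the paper: a first-moment count of the $O(1)$ admissible triples with $|B|\leq |A|+d$, the probability $\phi_1(t)^n$ that no edge crosses $A\times B$ in the with-replacement model, Stirling's bound on the two binomials, and the observation that the resulting base tends to $2^{1-c}<1$ as $\beta\to 0^{+}$. The only cosmetic difference is that you evaluate $\phi_1$ directly at $t=\tfrac{1-\beta}{2}+O(1/n)$ and keep the slightly sharper factor $2^{(1-\beta)n}$, whereas the paper bounds $\binom{n-\beta n}{b}\leq\binom{n}{n/2}$ and invokes the monotonicity of $\phi_1$ to reduce to $2\bigl(\tfrac12+\beta\bigr)^{c}<1$; both yield the same conclusion.
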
   
\begin{proof}
Consider a partition $\mathbf{W} = (S, A, B)$ of
the vertices of $\rg{n, m}$ such that $|B| \geq |A|$. Since
$|A| + |B| = (1 - \beta)n$, we see that 
$|B| \leq |A| + d$ if and only if 
$|B| \leq \frac{(1 - \beta)n + d}{2}$. 

The probability that $\mathbf{W}$ is a balanced $l$-partition is
\begin{eqnarray}
\label{treewidth-proof-eq02}
\probab{\mathbf{W} \mbox{ is an } \beta n\textrm{-partition}}{\rg{n,m}} &=&
    \left( 1 - \frac{tn(n - tn - \beta n)}{n(n-1)/ 2}
    \right)^{cn} \nonumber \\
   &=& \left(1 - 2t + 2t^{2} + 2t\beta + O(1/n)\right)^{cn} \nonumber \\
   &=&\phi_1(t).
\end{eqnarray} 
For a fixed vertex set $S$, there are $\binom{(1 - \beta) n}{b} $ ways
($\frac{1}{2}n \leq b = |B| \leq \frac{2}{3}n$) to choose the pair $(A, B)$ such that
one of them has the size $b$. It follows that
\begin{eqnarray}
\expectation{J_1}{\rg{n,m}} &=& \binom{n}{\beta n}
  \sum\limits_{\frac{(1-\beta)n}{2} \leq b \leq \frac{(1 - \beta)n}{2} + d }
           \binom{n - \beta n}{b} \left(\phi_{1}(\frac{b}{n})\right)^{n} \nonumber \\
  &\leq& \binom{n}{\beta n}
  \sum\limits_{\frac{(1-\beta)n}{2} \leq b \leq \frac{(1 - \beta)n}{2} + d }
           \binom{n}{b} \left(\phi_{1}(\frac{b}{n})\right)^{n}. \nonumber             
\end{eqnarray}
Since $\binom{n}{b}$ attains its maximum at $b = \frac{n}{2}$ and the function 
$\phi_{1}(t)$ is increasing in the interval $[\frac{1-\beta}{2}, 1]$, we have
by Stirling's formula (Lemma~\ref{lem-stirling}) that
\begin{eqnarray}
\expectation{J_1}{\rg{n,m}}
  &\leq& d \binom{n}{\beta n}
            \binom{n}{\frac{n}{2}} \left(\phi_{1}(\frac{1}{2})\right)^{n} \nonumber \\
  &\leq& d \binom{n}{\beta n} 2^n (\frac{1}{2} + \beta)^{cn} \nonumber \\
  &\leq& d \left(\frac{1}{\beta^{\beta} (1 - \beta)^{1 - \beta}}\right)^{n} 
    \left(2(\frac{1}{2} + \beta)^c\right)^n.  \nonumber           
\end{eqnarray}            
For any $c > 1$, there is some $\beta_1 > 0$ such that $2(\frac{1}{2} + \beta)^c < 1$
for any $\beta < \beta_1$. Since 
$\lim\limits_{\beta\rightarrow 0}\frac{1}{\beta^{\beta}(1-\beta)^{1 - \beta}} = 1$, there
exists some $\beta_2 > 0$ such that 
$\frac{1}{\beta^{\beta}(1-\beta)^{1 - \beta}} \leq 
  \left(2(\frac{1}{2} + \beta_1)^c\right)^{-1}$.

Taking $\beta^{*} = \min(\beta_1, \beta_2)$, we see that for any $\beta < \beta^{*}$,  
\begin{eqnarray*}
 \expectation{J_1}{\rg{n,m}} &\leq& 
   d \left(\frac{1}{\beta^{\beta} (1 - \beta)^{1 - \beta}}\right)^{n}   
     \left(2(\frac{1}{2} + \beta_1)^c\right)^n \\
 &\leq& d\gamma^{n} 
\end{eqnarray*}
where $0 < \gamma < 1$. Lemma~\ref{lem-j-1} follows. 
\end{proof}   
   
For the expectation of $J_2$, we need to take into consideration the requirement of 
being $d$-rigid in order to get a better bound.  
\begin{lemma}
\label{lem-j-2}
For $c = 1.073$, there is a constant $\beta_2^* > 0$ such that for any $\beta < \beta_2^*$, 
$\lim\limits_{n\rightarrow \infty}\expectation{J_2}{\rg{n,m}} = 0$. 
\end{lemma}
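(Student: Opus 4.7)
The plan is to bound $\expectation{J_2}{\rg{n,m}}$ from above and apply Markov's inequality exactly as in the proof of Lemma~\ref{lem-j-1}. First I would write
\begin{equation*}
\expectation{J_2}{\rg{n,m}} \leq \binom{n}{\beta n}\sum_{\frac{(1-\beta)n+d}{2} < b \leq \frac{2(1-\beta)n}{3}} \binom{(1-\beta)n}{b}\,\phi_1(b/n)^{n}\,\phi_2(b/n)^{n},
\end{equation*}
where the factor $\phi_1(b/n)^{n}$ is the probability that a fixed $\mathbf{W}=(S,A,B)$ with $|B|=b$ is an $l$-partition (as derived in the proof of Lemma~\ref{lem-j-1}), and $\phi_2(b/n)^{n}$ is the conditional probability of $d$-rigidity supplied by Theorem~\ref{thm-rigid-conditional}. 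The constraint $|B| > |A|+d$ restricts $b$ to be strictly greater than the midpoint, and the balanced condition caps it at $\frac{2}{3}(n-l-1)$.

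Next, I would apply Lemma~\ref{lem-stirling} to both binomial coefficients, using the crude bound $\binom{(1-\beta)n}{b} \leq \binom{n}{b}$ to simplify, yielding
\begin{equation*}
\expectation{J_2}{\rg{n,m}} \leq O(n)\cdot\left(\frac{1}{\beta^{\beta}(1-\beta)^{1-\beta}}\right)^{n}\cdot\max_{t}\left(\frac{\phi_1(t)\phi_2(t)}{t^{t}(1-t)^{1-t}}\right)^{n},
\end{equation*}
where the maximum is taken over $t\in[\frac{1-\beta}{2},\frac{2(1-\beta)}{3}]$. Since $\beta^{\beta}(1-\beta)^{1-\beta}\to 1$ as $\beta\to 0$, the task reduces to establishing $\max_t \phi_1(t)\phi_2(t)/[t^t(1-t)^{1-t}] < 1$ strictly for $c=1.073$, and then choosing $\beta$ small enough to absorb the $(\beta^\beta(1-\beta)^{1-\beta})^{-1}$ factor.

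The key analytic step is to locate the maximizer. By Lemma~\ref{lem-function-2} the ratio $\phi_1(t)/[t^{t}(1-t)^{1-t}]$ is increasing on $[\frac{1-\beta}{2},\frac{2}{3}]$. For $\phi_2(t)=e^{-r(t)(1+g(t))^{2}}$, Lemma~\ref{lem-function-1} states that $r(t)$ is decreasing on the same interval, and since $x(t,c)=2ct/(2t^{2}-2t+1)$ is increasing with $x>1$ throughout the interval, each term $(xe^{-x})^{i-1}$ in the series defining $g(t,c)$ is decreasing in $t$; hence $g$ is decreasing and $\phi_2$ is increasing. The full per-vertex ratio is therefore monotonically increasing, and its maximum occurs at the right endpoint $t^{*}=\frac{2(1-\beta)}{3}$. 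At $t=2/3$ and $c=1.073$ a direct calculation gives $\phi_1(2/3)=(5/9)^{1.073}$ and $(2/3)^{2/3}(1/3)^{1/3}$, so $\log[\phi_1(2/3)/f(2/3)]\approx 0.006$; taking $d$ large enough that $\epsilon=1/(d-1)$ is small enough to make $(1+\epsilon)^{2}$ close to $1$, the quantity $r(2/3)(1+g(2/3))^{2}$ exceeds $0.006$, giving a strict inequality $\phi_1(2/3)\phi_2(2/3)<f(2/3)$ by a positive margin $\mu>0$. Then picking $\beta$ small so that $(\beta^{\beta}(1-\beta)^{1-\beta})^{-1}e^{-\mu}<1$ forces $\expectation{J_2}{\rg{n,m}}=O(n)\gamma^{n}$ for some $\gamma<1$, and Markov's inequality concludes the proof.

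The main obstacle is the extreme tightness of this verification at $c=1.073$: the gap between $\log[\phi_1(2/3)/f(2/3)]$ and $r(2/3)(1+g(2/3))^{2}$ is on the order of $10^{-3}$, so the argument depends crucially both on taking $d$ large enough to drive $(1+\epsilon)^{2}$ close to $1$ and on using the weighted count $I$ introduced in the proof of Theorem~\ref{thm-rigid-conditional}. An unweighted count would give a Lipschitz constant of $2$ rather than $1+\epsilon$, halving $r$ and destroying the margin — which is precisely the technical innovation highlighted in Section~1.1.
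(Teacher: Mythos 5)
Your proposal is correct and follows essentially the same route as the paper: bound $\expectation{J_2}{\rg{n,m}}$ by $\binom{n}{\beta n}\sum_b\binom{n}{b}\phi_1^n\phi_2^n$, use Lemmas~\ref{lem-function-1} and~\ref{lem-function-2} to push the per-vertex ratio to the right endpoint $t=2/3$, verify numerically that the resulting base is below $1$ at $c=1.073$, $\beta=\epsilon=0$, and finish by continuity in $\beta$ and $\epsilon$ plus Markov. Your explicit justification that $g(t)$ is decreasing (via $x(t,c)$ increasing and $xe^{-x}$ decreasing for $x>1$) is a small but welcome addition, since the paper's Lemma~\ref{lem-function-1} as stated only covers $r(t)$.
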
    
\begin{proof}
Consider a partition $\mathbf{W} = (S, A, B)$ of
the vertices of $\rg{n, m}$ such that
$|S| = l + 1 = \beta n, |B| \geq |A| + d, |B| = b = tn, \mbox{ with }
\frac{1-\beta}{2} \leq t \leq \frac{2(1 - \beta)}{3}$. Let $I_{\mathbf{W}}$
be the indicator function of the event that
$\mathbf{W}$ is a $d$-rigid and balanced $l$-partition. We have
\begin{eqnarray}
\label{treewidth-proof-eq01}
&&\expectation{I_{\mathbf{W}}}{\rg{n,m}} = 
  \probab{\mathbf{W} \mbox{ is a d-rigid and balanced } \beta n\mathrm{-partition}}{\rg{n,m}}
       \nonumber \\
&&\ \ \ = \probab{\mathbf{W} \mbox{ is a balanced } 
           \beta n\textrm{-partition}}{\rg{n,m}} \times      
            \nonumber \\
&&\ \ \ \ \ \ \ \ \ \   \probab{\mathbf{W} \mbox{ is d-rigid } |\, 
        \mathbf{W} \mbox{ is a balanced }
        \beta n\mathrm{-partition}} {\rg{n,m}}.
\end{eqnarray}
From Theorem \ref{thm-rigid-conditional}, we know that
$$
\probab{\mathbf{W} \mbox{ is d-rigid } |\, \mathbf{W} 
 \mbox{ is a balanced } \beta n\mathrm{-partition}}{\rg{n,m}}
 \leq e^{-r(1 + g)^2n}.
$$
By the definition of a balanced partition,
\begin{eqnarray}
\label{treewidth-proof-eq02}
\probab{\mathbf{W} \mbox{ is a balanced } \beta n\textrm{-partition}}{\rg{n,m}} &=&
    \left( 1 - \frac{tn(n - tn - \beta n)}{n(n-1)/ 2}
    \right)^{cn} \nonumber \\
   &=& \phi_1(t).
\end{eqnarray}
For a fixed vertex set $S$ with $|S| = \beta n$, there are $\binom{n - \beta n}{b} $ ways
($\frac{1}{2}n \leq b \leq \frac{2}{3}n$) to choose the pair $(A, B)$ such that
$|B| = b$. Therefore,
\begin{eqnarray*}
\expectation{J_2}{\rg{n,m}} &=& \sum\limits_{\mathbf{W}} \expectation{I_{\mathbf{W}}}{\rg{n, m}}
            \nonumber \\
&\leq& \binom{n}{\beta n}\sum\limits_{\frac{1}{2}n \leq b \leq \frac{2}{3}n }
           \binom{n - \beta n}{b} \left(\phi(\frac{b}{n})\right)^{n} \nonumber \\
&\leq& \binom{n}{\beta n} \sum\limits_{\frac{1}{2}n \leq b \leq \frac{2}{3}n }
           \binom{n}{b} \left(\phi(\frac{b}{n})\right)^{n}.
\end{eqnarray*}
By Lemma~\ref{lem-stirling}, we have for $n$ large enough
\begin{eqnarray*}
 \expectation{J_2}{\rg{n,m}} &\leq& 
       \left(\frac{1}{\beta^{\beta} (1 - \beta)^{1-\beta}} \right)^{n}
          \sum\limits_{\frac{1}{2}n \leq b \leq \frac{2}{3}n}
           \left(\frac{\phi_{1}(\frac{b}{n})\phi_{2}(\frac{b}{n})}{\frac{b}{n}^{\frac{b}{n}}
           (1 - \frac{b}{n})^{1 - \frac{b}{n}}}\right)^{n}
\end{eqnarray*}
Recall that  
$$
 \phi_{2}(t) = \left(e^{\frac{1}{c}r(t, c)(1 + g(t, c))^2}\right)^{c},
$$
and see Equation (\ref{eq-rgx-def}) for the definition of $r(t, c)$ and $g(t, c)$.
By Lemma~\ref{lem-function-1}, $r(t)$ and $g(t)$ are decreasing on $[\frac{1 - \beta}{2}, \frac{2}{3}]$. Consequently $\phi_2(t)$ is increasing on $[\frac{1 - \beta}{2}, \frac{2}{3}]$.
It follows that
$$
   \phi_{2}(\frac{b}{n}) \leq  \phi_{2}(\frac{2}{3}) 
$$
By Lemma~\ref{lem-function-2},
$$
\frac{\phi_{1}(\frac{b}{n})}{\frac{b}{n}^{\frac{b}{n}} (1 - \frac{b}{n})^{1 - \frac{b}{n}}}
   \leq \frac{\phi_{1}(\frac{2}{3})}{(\frac{2}{3})^{\frac{2}{3}}
           (\frac{1}{3})^{\frac{1}{3}}}
           = \frac{(\frac{5}{9} + \frac{4}{3}\beta)^{c}}
             {(\frac{2}{3})^{\frac{2}{3}}(\frac{1}{3})^{\frac{1}{3}}}.
$$
Therefore,
\begin{equation}
\label{eq-j2-1}
 \expectation{J_2}{\rg{n,m}} \leq 
    O(n)\left(\frac{1}{\beta^{\beta} (1 - \beta)^{1-\beta}} \right)^{n}
   \left(\frac{\left( (\frac{5}{9} + \frac{4}{3}\beta) 
     \phi_2(\frac{2}{3})\right)^{c}}
             {(\frac{2}{3})^{\frac{2}{3}}(\frac{1}{3})^{\frac{1}{3}}}
   \right)^{n}.
\end{equation}
Consider the function 
$$
z(\beta, \epsilon, c) =  \frac{\left((\frac{5}{9} + \frac{4}{3}\beta)
      \phi_2(\frac{2}{3})\right)^{c} }
             {(\frac{2}{3})^{\frac{2}{3}}(\frac{1}{3})^{\frac{1}{3}}}. 
$$
Numerical calculations using MATLAB shows that for $c = 1.073,\ \beta = 0$,\ and $\epsilon = 0$, we have
$$
z(0, 0, 1.073) < 1. 
$$
Since $z(\beta, \epsilon, 1.073)$ is continuous 
in $\beta$ and $\epsilon$ on $[0, 1]$, there exist constants 
$\beta_1 > 0$ and $\epsilon_1 > 0$ such that 
$$
 z(\beta_1, \epsilon, 1.073) < 1, \forall \epsilon < \epsilon_1. 
$$
By Lemma~\ref{eq-function-0}, there exits a constant $\beta_2 > 0$ such that 
$$
 \frac{1}{\beta^{\beta} (1 - \beta)^{1-\beta}} < \frac{1}{z(\beta_1, \epsilon_1, 1.073)}, 
  \forall \beta \leq \beta_2.
$$
Let $\beta^{*} = \min(\beta_1, \beta_2)$. It follows that
 for any $\beta < \beta^*$ and $\epsilon < \epsilon_2$,   
\begin{eqnarray}
  \expectation{J_2}{\rg{n,m}} &\leq&  
       O(n) \frac{1}{\beta_{*}^{\beta_*} (1 - \beta_*)^{1-\beta_*}}
          z(\beta^*, \epsilon^*, 1.073) \nonumber \\
   &\leq& O(n) \frac{1}{\beta_{2}^{\beta_2} (1 - \beta_2)^{1-\beta_2}}
       z(\beta_1, \epsilon, 1.073)    \nonumber \\
   &\leq& O(n) \gamma^{n}        
\end{eqnarray} 
for some constant $0 < \gamma < 1$.
This proves Lemma~\ref{lem-j-2}.
\end{proof}

It follows from Equation (\ref{eq-markov}) that for any $\beta \leq \beta^*$,
$$
 \lim\limits_{n}\probab{tw(G(n, m)) \leq \beta n}{\rg{n,m}} = 0,
\textrm{ if } \frac{m}{n} = 1.073. 
$$
Since the property that the treewidth of a graph is greater $\beta n$
is a monotone increasing graph property, we have that 
for any $c \geq 1.073$,
$$
\lim\limits_{n}\probab{tw(G(n, cn)) \leq \beta n}{\rg{n,m}} = 0.
$$   
Theorem~\ref{thm-treewidth-bound} follows.
\qed

\section{Treewidth of Random Intersection Graphs: Proof of Theorems~\ref{thm-intersection-graph}}
\label{sec-thm-2}


Let $p = \frac{c}{m}$.
Consider a balanced triple $\textbf{W} = (S, A, B)$ with $|S| = \beta n$ and
$|A| = tn$. We upper bound the 
probability that $\textbf{W}$ is a balanced $\beta n$-partition and then use Markov's inequality. By the definition of random intersection graphs, there is no edge between the two vertex sets $A\setminus S$ and $B\setminus S$ if and only if 
\begin{equation}
\label{eq-thm2-1}
e \not\in  \left(\bigcup\limits_{v\in A\setminus S} S_v\right)
      \cap \left(\bigcup\limits_{v\in B\setminus S} S_v\right),\ \forall e \in M,
\end{equation}
which in turn is equivalent to the following: for every $e \in M$, 
\begin{equation}
\label{eq-thm2-2}
\textrm{ either } e \not\in S_v, \forall v \in A\setminus S,\ \  
\textrm{ or } e \not\in S_v, \forall v\in B\setminus S.  
\end{equation}
Since $S_v$'s are formed independently and since $\probab{e\in S_v}{} = p$ for any $e \in M \textrm{ and } v \in V$, the probability for the event in Equation~(\ref{eq-thm2-2}) to occur is
$$
\left((1 - p)^{an} + (1 - p)^{bn} - (1 - p)^{(1 - \beta)n}\right)^m.
$$   
It follows that 
\begin{eqnarray}
&&\probab{\textbf{W} \textrm{ is a balanced } \beta n\textrm{-partition}}{} \nonumber \\  
&&\ \ \ \ \ = \left((1 - p)^{a} + (1 - p)^{b} - (1 - p)^
                    {(1 - \beta)n - a - b}\right)^m \nonumber \\
&&\ \ \ \ \ = (1 - p)^{am} 
  \left(1 + (1 - p)^{(b - a)} - (1 - p)^{(1 - \beta)n - a - b}\right)^{m}. 
\end{eqnarray}
There are $\binom{n}{\beta n}$ ways to choose $S$ and for each fixed $S$, there
are $\binom{n - \beta n}{tn}$ ways to choose $A$ with $|A| = tn$. Since the treewidth of $\rig{n, m, p}$ is at most $\beta n$ implies that there is a balanced $\beta n$-partition, we have by Markov's inequality  that for $p \geq \frac{c}{m}, c > 2$,
\begin{eqnarray*}
&&\probab{tw(\rig{n, m, p}) \leq \beta n}{} \\
&&\ \ \leq \probab{\textrm{There exsits a balanced } \beta n\textrm{-partition} }{}\\
&&\ \ \leq \binom{n}{\beta n}
     \sum\limits_{\frac{1}{3}n \leq a \leq \frac{1}{2}n} \binom{n}{a}
    (1 - p)^{am} \left(1 + (1 - p)^{(b - a)} - (1 - p)^{(1 - \beta)n - a - b}\right)^{m} \\
&&\ \ \leq O(1)\binom{n}{\beta n}
      \sum\limits_{\frac{1}{3}n \leq a \leq \frac{1}{2}n}
         \left( \frac{(\frac{1}{e})^{\frac{a}{n}c}}{(\frac{a}{n})^{\frac{a}{n}} 
             (1 - \frac{a}{n})^{1 - \frac{a}{n}} } \right)^n \\
&&\ \ \leq O(1)n \binom{n}{\beta n} \left(\frac{(\frac{1}{e})^{\frac{2}{3}}}
  {(\frac{1}{3})^{\frac{1}{3}} (\frac{2}{3})^{\frac{2}{3}}} \right)^{n}.                   
\end{eqnarray*}
where last inequality is because the function $\frac{(\frac{1}{e})^{tc}}{t^t(1-t)^{1 - t}}$
is decreasing on $[\frac{1}{3}, \frac{1}{2}]$ for any $c > 2$.
Note that $\frac{(\frac{1}{e})^{\frac{2}{3}}}
  {(\frac{1}{3})^{\frac{1}{3}} (\frac{2}{3})^{\frac{2}{3}}} < 1$.
Therefore, for sufficiently small $\beta$, we have 
$$
\lim\limits_{n\rightarrow \infty}\probab{tw(\rig{n, m, p}) \leq \beta n}{} = 0.
$$
This proves Theorem~\ref{thm-intersection-graph}. 
\qed

\section{The Barab\'{a}si-Albert Model: Proof of Theorem~\ref{thm-power-law-graph}}
\label{sec-thm-3}
Let $V=\{v_1, v_2, \cdots, v_n\}$ be the set of vertices in $\rsg{n, m}$ and
$V_i = \{v_1, \cdots, v_i\}$. Without loss of generality, assume that the vertices are added to $\rsg{n, m}$ in this order in the 
iterative construction of $\rsg{n, m}$.  Let $I_1$ be the first half of the vertices, i.e, $I_1 = \{v_1, v_2, \cdots, v_{\frac{1}{2}n}\}$, and $I_2$
be the second half $\{v_{\frac{1}{2}n + 1}, \cdots, v_n\}$. 

Let $\textbf{W} = (S, A, B)$ be a balanced triple of disjoint vertex subsets 
such that $|S| = \beta n$. (See Definition~\ref{rigid_partition_def} for the details).
Write $|A| = an$ and $|B| = bn$. Assume, without loss of generality, that  $|A| \leq |B|$
so that $\frac{1 - \beta}{3}\leq a\leq \frac{1 - \beta}{2}$.   
Considering the way in which $A$ and $B$ intersect with $I_1$ and $I_2$, let us write
\begin{eqnarray} 
&&|I_1 \cap A| = sn, \ |I_2 \cap A| = (a - s)n; \nonumber \\
&&|I_1 \cap B| = tn, \ |I_2 \cap B| = (b - t)n; 
\end{eqnarray}
where $s$ and $t$ shall satisfy
$$
0\leq s\leq \frac{1 - \beta}{2},\ s + t = \frac{1 - \beta}{2}. 
$$ 

We upper bound the probability 
$\probab{\textbf{W} \textrm{ is a balanced } \beta n\textrm{-partition}}{\rsg{n, m}}$.
Let $E$ be the event that $\textbf{W}$ is a balanced $\beta n$-partition, and
focus on what happens when the second half of the vertices, i.e. those in $I_2$, are added
to $\rsg{n, m}$. Define the following events
\begin{equation}
E_i = \left\{
\begin{array}{l}
  \{ N(v_i) \cap (I_1 \cap B) = \emptyset\}, \textrm{ if } v_i \in I_2\cap A \\
  \{ N(v_i) \cap (I_1 \cap A) = \emptyset\}, \textrm{ if } v_i \in I_2\cap B \\
\end{array}
\right.
\end{equation}
We have
$$
E \subset E_{\frac{n}{2} + 1} \cap \cdots \cap E_{n}.
$$
Therefore,
\begin{equation}
\probab{E}{\rsg{n, m}} \leq \probab{E_{n/2 + 1} \cap \cdots \cap E_{n}}{\rsg{n, m}}.
\end{equation}
The following lemma  bounds the conditional probability of $E_i$ given $\rsg{n, m}[V_{i - 1}]$. 
\begin{lemma}
\begin{equation}
\probab{E_i ~|~ \rsg{n, m}[V_{i - 1}]}{\rsg{n, m}}
  \leq \left\{
\begin{array}{l}
(1 - \frac{s}{2})^m, \textrm{ if } v_i \in I_2 \cap B  \\
\\ 
(1 - \frac{t}{2})^m, \textrm{ if } v_i \in I_2 \cap A
\end{array}  
\right.
\end{equation}
\end{lemma}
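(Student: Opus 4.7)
The plan is to decompose the event $E_i$ across the $m$ preferential-attachment trials that add the edges of $v_i$, and to bound each factor by $1 - s/2$ (respectively $1 - t/2$ in the symmetric case). Consider the case $v_i \in I_2 \cap B$; the case $v_i \in I_2 \cap A$ is identical after exchanging $(A,s) \leftrightarrow (B,t)$. Since $v_i \in I_2$, we have $i > n/2$, so all $sn$ vertices of $I_1 \cap A$ already lie in $V_{i-1}$. Let $E_i^{(j)}$ denote the event that the $j$-th neighbor chosen for $v_i$ does not lie in $I_1 \cap A$, so that $E_i = \bigcap_{j=1}^{m} E_i^{(j)}$. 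Writing $G_{i-1} = \rsg{n,m}[V_{i-1}]$ and applying the chain rule,
\begin{equation*}
\probab{E_i \mid G_{i-1}}{\rsg{n,m}} = \prod_{j=1}^{m} \probab{E_i^{(j)} \mid G_{i-1},\, E_i^{(1)},\ldots,E_i^{(j-1)}}{\rsg{n,m}}.
\end{equation*}

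The key step is to lower-bound the per-step probability of attaching to $I_1 \cap A$. By equation~(\ref{eq-preferential-prob}), conditional on $G_{i-1}$ and on the first $j-1$ attachment outcomes, this probability equals
\begin{equation*}
\frac{\sum_{w \in I_1 \cap A}\bigl(\degree{G_{i-1}}{w} + d_w(j)\bigr)}{m(i-1) + 2(j-1)}.
\end{equation*}
Every vertex in $V_{i-1}$ enters the graph with $m$ incident edges, so its degree in $G_{i-1}$ is at least $m$; hence the summed numerator is bounded below by $m\cdot|I_1 \cap A| = smn$, and the non-negative increments $d_w(j)$ only strengthen this. Since $i \leq n$ and $j \leq m$, comparing this $\Theta(smn)$ numerator with the $\Theta(mn)$ normalization (using the standard preferential-attachment total-degree count) yields a lower bound of at least $s/2$ on the step-$j$ attachment probability to $I_1 \cap A$, uniformly in $j$ and in the earlier attachment history. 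Consequently each factor in the product above is at most $1 - s/2$, giving $\probab{E_i \mid G_{i-1}}{\rsg{n,m}} \leq (1 - s/2)^m$; replacing $(A,s)$ by $(B,t)$ in this argument yields the bound $(1 - t/2)^m$ in the case $v_i \in I_2 \cap A$.

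The main subtlety is the uniformity of the per-step bound across the conditioning on prior attachment outcomes. This rests on two deterministic facts: the degrees $\degree{G_{i-1}}{w}$ are fixed once $G_{i-1}$ is fixed, and the corrections $d_w(j)\geq 0$ can be dropped without weakening the lower bound on the numerator. Thus neither the minimum-degree-$m$ guarantee nor the order of magnitude of the denominator depends on the random partial attachments within the step, and the per-step estimate $s/2$ holds in the worst case, which is exactly what is needed to multiply $m$ factors together and conclude.
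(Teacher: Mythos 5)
Your proof is correct and follows essentially the same route as the paper's: decompose the event over the $m$ attachment trials, lower-bound the probability of hitting $I_1\cap A$ at each trial by $s/2$ using the fact that every vertex of $I_1$ has degree at least $m$ in $G_{i-1}$ together with the $O(nm)$ total-degree normalization, and multiply the $m$ factors. Your handling of the conditioning on the earlier attachment history is merely more explicit than the paper's, which writes the product of per-step bounds directly.
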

\begin{proof}
Consider a vertex $v_i \in I_2 \cap B$ (The case that $v_i \in I_2 \cap A$ is similar).  
The total vertex degree of $\rsg{n, m}[V_{i - 1}]$
is $2(i - 1)m \leq 2nm$. The total vertex degree of the vertices in $I_1 \cap A$ is at least 
$snm$. Note that the event $E_i$ occurs implies that none of the vertices in 
$I_1 \cap A$ is selected as the neighbor of $v_i$ in the $m$-step procedure to pick 
$v_i$'s neighbors.

By the definition of preferential attachment mechanism in the Barab\'{a}si-Albert model, Equation~(\ref{eq-preferential-prob}), we have that 
\begin{eqnarray*}
&&\probab{E_i ~|~ \rsg{n, m}[V_{i - 1}]}{\rsg{n, m}} \\
&&\ \ \ \leq (1 - \frac{snm}{2(i - 1)m}) (1 - \frac{snm}{2(i - 1)m + 2})
      \cdots (1 - \frac{snm}{2(i - 1)m + 2(m - 1)}) \\
&&\ \ \ \leq (1 - \frac{snm}{2nm})^m \\       
&&\ \ \ = (1 - \frac{s}{2})^m.  
\end{eqnarray*}
This proves the lemma.   
\end{proof}

Continue the proof of Theorem~\ref{thm-power-law-graph}. From Lemma,   
we have
\begin{eqnarray}
\probab{E}{\rsg{n, m}} &\leq& \probab{E_{n/2 + 1} \cap \cdots \cap E_{n}}{\rsg{n, m}} 
   \nonumber \\
 &=& \prod\limits_{i = n/2+1}^{n}\probab{E_{i} ~|~ \rsg{n, m}[V_{i - 1}]}{\rsg{n, m}} 
       \nonumber \\
 &\leq& \left((1 - s/2)^{m}\right)^{|I_2 \cap B|}
         \left((1 - t/2)^{m}\right)^{|I_2 \cap A|} \nonumber \\
 &=& \left((1-s/2)^{b - t}(1 - t/2)^{a - s}\right)^{mn}
     \nonumber \\      
\end{eqnarray} 
Taking into consideration that $a + b = (1 - \beta)n$, we see  that
\begin{eqnarray}
\probab{E}{\rsg{n, m}} &\leq& \left((1-s/2)^{b - t}(1 - t/2)^{a - s}\right)^{mn} \nonumber \\
 &=&\left( (1 - s/2)^{b + s - (1 - \beta)/2}
           (3/4 + s/2)^{a - s} \right)^{mn} \nonumber \\
 &=&\left( (1 - \frac{s}{2})^{s - a + \frac{1 - \beta}{2}}
           (\frac{3}{4} +  \frac{s}{2})^{a -s} \right)^{mn}. 
\end{eqnarray}
Consider the behavior of the function 
\begin{eqnarray}
\label{eq-thm3-f-def}
 f(s, \beta) &=& (1 - \frac{s}{2})^{s - a + \frac{1 - \beta}{2}} 
        (\frac{3}{4} +  \frac{s}{2})^{a -s} \nonumber \\
      &=& \left(\frac{1 - s/2}{3/4 + s/2}\right)^{s - a} 
         (1 - s/2)^{\frac{1}{2}} (1 - s/2)^{-\beta/2}.  
\end{eqnarray}
for $0 \leq s \leq \frac{1}{2}$ and $\frac{1 - \beta}{3} \leq a \leq \frac{(1 - \beta)}{2}$. 
We have
\begin{lemma} 
There is a constant $\beta^* > 0$ such that for any $\beta < \beta^*$, 
\begin{equation}
\label{eq-thm3-max}
 f_{max} = \max\{f(s, \beta): s\in[0, 1/2], a \in [(1 - \beta)/3, (1 - \beta)/2]\} < 0.9425.
\end{equation}
\end{lemma}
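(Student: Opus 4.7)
The plan is to reduce the two-variable optimization over $(s,a)$ to two one-variable problems by monotonicity in $a$, carry out the analysis at $\beta = 0$ where the arithmetic is cleanest, and then extend to a neighborhood of $\beta = 0$ by continuity.

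First, let
\begin{equation}
h(s,a,\beta) := \log f(s,\beta) = \left(s - a + \tfrac{1-\beta}{2}\right)\log(1 - s/2) + (a - s)\log(3/4 + s/2). \nonumber
\end{equation}
I would observe that $\partial h/\partial a = \log\frac{3/4 + s/2}{1 - s/2}$ depends only on $s$, and is negative for $s < 1/4$, zero at $s = 1/4$, and positive for $s > 1/4$. Hence for fixed $s$ the maximum over $a \in [(1-\beta)/3, (1-\beta)/2]$ is attained at $a = (1-\beta)/3$ when $s \le 1/4$ and at $a = (1-\beta)/2$ when $s \ge 1/4$. This reduces the problem to bounding two functions of $s$ alone, call them $h_1(s,\beta)$ (the case $a = (1-\beta)/3$, for $s \in [0,1/4]$) and $h_2(s,\beta)$ (the case $a = (1-\beta)/2$, for $s \in [1/4,1/2]$).

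Next I would carry out the analysis at $\beta = 0$. For $h_2$, each summand is the product of an affine function of $s$ and one of the concave functions $\log(1 - s/2)$ or $\log(3/4 + s/2)$; computing second derivatives confirms $h_2(\cdot,0)$ is concave. Evaluating $dh_2/ds$ at the endpoints shows it vanishes at $s = 1/4$ and is negative at $s = 1/2$, so concavity forces $h_2(\cdot,0)$ to be decreasing on $[1/4, 1/2]$. Its maximum is $h_2(1/4, 0) = \tfrac{1}{2}\log(7/8)$, giving $f \le (7/8)^{1/2} < 0.9355$, comfortably below $0.9425$. For $h_1(\cdot,0)$ on $[0,1/4]$ the same concavity argument applies, and $dh_1/ds$ is positive at $s = 0$ and negative at $s = 1/4$, yielding a unique maximizer $s^* \in (0,1/4)$. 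I would bracket $s^*$ in a small interval $[s^* - \eta, s^* + \eta]$ numerically (one finds $s^* \approx 0.172$ and $f(s^*,0) \approx 0.9424$), and use concavity to propagate a rigorous upper bound outside the bracket and a Taylor-type bound inside.

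Finally, $h(s,a,\beta)$ is jointly continuous on the compact domain and depends on $\beta$ only through the terms $(s-a+(1-\beta)/2)\log(1-s/2)$ and the shift of the range of $a$ by $O(\beta)$, so the maximum value is continuous in $\beta$. Setting $\delta := 0.9425 - \max f(\cdot,0) > 0$ and choosing $\beta^* > 0$ so that the $\beta$-perturbation is smaller than $\delta/2$ delivers the claim for all $\beta < \beta^*$. The main obstacle is the sharp numerical estimate in the $h_1$ case: the true maximum sits only $\sim 10^{-4}$ below the stated bound $0.9425$, so a clean closed-form dismissal is unlikely and a rigorous argument will need either interval arithmetic around $s^*$ or an explicit concavity-based upper envelope tuned tightly enough to beat $0.9425$.
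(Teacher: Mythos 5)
Your plan is correct and lands on the same skeleton as the paper's proof -- strip off the $(1-s/2)^{-\beta/2}$ factor by continuity, split at $s=1/4$, get the exact value $(7/8)^{1/2}\approx 0.9354$ on $[1/4,1/2]$, and fall back on rigorous numerics on $[0,1/4]$ where the maximum $\approx 0.9424$ sits uncomfortably close to $0.9425$ -- but your execution differs in two genuine ways. First, you eliminate $a$ at the outset via the sign of $\partial h/\partial a=\log\frac{3/4+s/2}{1-s/2}$, pinning the worst case to $a=(1-\beta)/3$ for $s\le 1/4$ and $a=(1-\beta)/2$ for $s\ge 1/4$; the paper never does this reduction and instead carries $a$ as a free parameter, showing $f'(s)<0$ on $[1/4,1/2]$ uniformly in $a$ (using $2s-\tfrac{7}{4}a+\tfrac{3}{8}>0$) and bounding the exponent $s-a$ crudely on $[0,1/4]$. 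Your reduction is cleaner and gives the pleasant bonus that $h_2'(1/4)=0$ exactly, so concavity alone settles the right interval. Second, on $[0,1/4]$ the paper writes $f=g\cdot h$ as a product of an increasing and a decreasing factor and takes the grid bound $\max_i g(s_{i+1})h(s_i)$, whereas you propose concavity of $\log f$, bracketing the unique critical point, and tangent-line/Taylor upper envelopes. Both are legitimate finite-verification schemes, but your explicit warning that the margin is only $\sim 10^{-4}$ is well taken: a back-of-envelope check shows the paper's stated grid with $d=10$ (spacing $0.025$) yields a bound near $0.949$ on the cell containing $s^*\approx 0.172$, so only the much finer grids alluded to in the paper's footnote actually certify $<0.9425$; your concavity-based envelope, tuned near $s^*$, would need comparable care but is no worse off. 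The continuity-in-$\beta$ step at the end is the same in both arguments.
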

\begin{proof} 
Note that the last term $(1 - s/2)^{-\beta/2}$ of $f(s, \beta)$ can be made arbitrarily 
to 1 by requiring that $\beta$ is less than a sufficiently small number, say $\beta_0$.  
We, therefore, only need to consider the function 
$$
 f(s) = \left(\frac{1 - s/2}{3/4 + s/2}\right)^{s - a} 
         (1 - s/2)^{\frac{1}{2}}.
$$

First, we claim that $f(s) \leq \left(\frac{7}{8}\right)^{\frac{1}{2}}$
for any $s\in [\frac{1}{4}, \frac{1}{2}]$. 
To see this, we  take the logarithm on both sides of Equation~(\ref{eq-thm3-f-def}) to obtain
$$
\log f(s) = (s - a - \frac{1}{2})\log (1 - \frac{s}{2}) + (a - s)\log (\frac{3}{4} + \frac{s}{2}).
$$
Taking derivative on both sides in the above, we get
$$
\frac{1}{f(s)} f'(s) = \log \frac{1 - \frac{s}{2}}{\frac{3}{4} + \frac{s}{2}}
  - \frac{1}{2} \frac{2s - \frac{7}{4}a + \frac{3}{8}}{(1 - \frac{s}{2})(\frac{3}{4} + \frac{s}{2})}. 
$$
Since for any $s \geq \frac{1}{4}$ and $\frac{1 - \beta}{3} \leq a \leq \frac{1 - \beta}{2}$,  
$\frac{1 - \frac{s}{2}}{\frac{3}{4} + \frac{s}{2}} \leq 1$ and
$2s - \frac{7}{4}a + \frac{3}{8} > 0$, we see that
$f'(s) < 0$. The claim
holds since $f(\frac{1}{4}) = \left(\frac{7}{8}\right)^{\frac{1}{2}} = 0.9354$.

Now consider the interval $[0, \frac{1}{4}]$. Let $\beta_1$ be a constant such that for any 
$\beta < \beta_1$, $\frac{1}{4} - \frac{1 - \beta}{3} < 0$.
Split $[0, \frac{1}{4}]$ into $ d + 1$ segments
and consider the $(d + 1)$ intervals $[s_{i}, s_{i + 1}]$ where
$s_i = i \frac{1}{4d}, \forall 0\leq i\leq d$. Since 
$g(s) = \left(\frac{1 - s/2}{3/4 + s/2}\right)^{s - a}$
is decreasing in $[0, 1/2]$, $s - a < s - \frac{1}{3} < 0$ for any $s \in [0, 1/4]$ and 
$a \in [(1 - \beta)/3, (1 - \beta)/2]$, and $h(s) = (1 - s/2)^{\frac{1}{2}}$ is decreasing in $[0, 1/4]$, we have
\begin{eqnarray*}
\max\limits_{s\in[0, 1/4]}f(s)
   &=& \max_{0\leq i\leq d}\{\max\limits_{s \in [s_i, s_{i + 1}]}f(s)\} \nonumber \\
  &\leq& \max_{0\leq i\leq d}
        (g(s_{i+1}) h(s_i)).
\end{eqnarray*}  
Numerical calculations\footnote{We also tried $d$ up to 10, 000, and found out that the value seems to converge to 0.9424} using $d = 10$ gives us 
$\max_{0\leq i\leq d} (g(s_{i+1}) h(s_i)) < 0.9425$.
Take $\beta^* = \min\{\beta_0, \beta_1\}$, we get Equation~(\ref{eq-thm3-max}).  
\end{proof}

To complete the proof of Theorem~\ref{thm-power-law-graph}, we see from Markov inequality
that the expected number of balanced $\beta$-partition is at most
$$
 \binom{n}{\beta n}\binom{n}{a} \left( (1 - \frac{s}{2})^{s - a + \frac{1}{2}}
           (\frac{3}{4} +  \frac{s}{2})^{a -s} \right)^{mn}
     \leq   \binom{n}{\beta n}\binom{n}{an} 0.9425^{mn}.    
$$ 
Numerical calculation shows that $0.9425^{12} < \frac{1}{2}$. Since $an \leq \frac{1}{2}n$, 
we have by Lemma~\ref{lem-stirling}
and Lemma~\ref{eq-function-0} that there is a constant $\beta_2$ such that for any 
$\beta < \beta_2$, 
$$
\lim\limits_{n\rightarrow \infty}  \binom{n}{\beta n}\binom{n}{an} 0.9425^{mn} = 0.
$$

Let $\beta = \min\{\beta^*, \beta_2\}$ where $\beta^*$ is the constant required 
in Lemma~\ref{eq-thm3-max}. It follows that
for any $m \geq 12$, the expected number of balanced $\beta n$-partitions in 
$\rsg{n, m}$ tends to zero, and consequently 
$$
 \lim\limits_{n\rightarrow \infty}\probab{tw(\rsg{n, m}) > \beta n}{\rsg{n, m}} = 1.
$$ 
This completes the proof of Theorem~\ref{thm-power-law-graph}. 

\appendix 
\section{Proof of Proposition~\ref{prop-equiv}}
\label{appendix-1}
The result actually holds for any monotone increasing combinatorial property in 
random discrete structures,    
as has been observed in \cite{kirousis94threshold,achlioptas99thesis} and formally proved in \cite{kirousis96tech}. For completeness, we give an alternative pure measure-theory style 
proof here.   

Recall that a random graph can be identified with a properly-defined 
probability space. The Erd\"{o}s-R\'{e}nyi random graph $G(n, m)$ corresponds to the probability space $(\Omega_m, \probab{\cdot}{\rg{n, m}})$  where $\Omega$
is the collection of the $\binom{N}{m}$ subsets of $m$ edges ($N = \binom{n}{2}$), and 
$\probab{\cdot}{\rg{n, m}}$ is 
\begin{equation}
\label{eq-app-1-prob-0}
\probab{\omega}{\rg{n, m}} = \frac{1}{\binom{N}{m}}, \forall \omega \in \Omega.
\end{equation}    
Each sample point $\omega\in \Omega_m$ corresponds to a set of $m$ edges selected 
uniformly at random without replacement from the $N$ potential edges.  

The random graph  $\overline{G}(n, m)$, where the $m$ are selected uniformly at random, 
but with replacement, can be identified with the following 
probability space   
$(\overline{\Omega}, \probab{\cdot}{\overline{G}(n, m)})$
where
\begin{enumerate} 
\item $\overline{\Omega}_m = \prod\limits_{i = 1}^{m}\mathcal{E}_i$ where each 
$\mathcal{E}_i$ is the set of all $\binom{n}{2}$ possible edges. 
A sample point 
$\overline{\omega} = \{\overline{\omega}_i, 1\leq i\leq m\} \in \overline{\Omega}_m$
corresponds to a multip-graph with $m$ edges.   
\item The probability measure  $\probab{\cdot}{\overline{G}(n, m)}$ is 
\begin{equation}
\label{eq-prob-space-1}
\probab{\omega}{\overline{G}(n, m)} = \left(\frac{1}{\binom{n}{2}}\right)^{m}.
\end{equation}
Each sample point $\omega\in \overline{\Omega}_m$ is an outcome of the 
random experiment of selecting $m$ edges independently and uniformly at random with replacement
from the set of all possible edges. Also note that the graph represented by 
a sample point in $\overline{\Omega}_m$ is actually a multi-graph, i.e., there are may be
more than one edges between a pair of vertices.  
\end{enumerate}

Let $\beta > 0$ be a fixed constant.
Let $\overline{Q}_m \subset \overline{\Omega}_m$ be the set of sample points 
$\overline{\omega}$ such that the treewidth of the multi-graph determined by 
$\overline{\omega}$ is greater than $\beta n$, and let  
$Q_m \subset \Omega_m$ be the set of sample points $\omega$ such that the treewidth of the simple graph determined by $\omega$ is greater than $\beta n$. 
   
For each $\overline{\omega}\subset \overline{\Omega}_m$, let 
$r(\overline{\omega}) \in \Omega_{|r(\overline{\omega})|}$ be the set of distinct edges 
that $\overline{\omega}$ has, and
let $E_i = \{\overline{\omega}\in \overline{\Omega}_m: |r(\overline{\omega})| = i\}$
be the set of sample points in $\overline{\Omega}$ that have exactly $i$ distinct edges.
For each sample point $\omega \in \Omega_i$, define 
$$
 T_{i}(\omega) = \{\overline{\omega} \in \overline{\Omega}_m: r(\overline{\omega}) = \omega\}. 
$$   
We claim that $\{T_{\omega}: \omega \in \Omega_i\}$  satisfies the following
\begin{equation}
\label{eq-app-1-par-1}
\bigcup\limits_{\omega \in \Omega_i}T_{i}(\omega) = E_i;
\end{equation}
\begin{equation}
\label{eq-app-1-par-2}
T_i(\omega_1) \cap T_i(\omega_2) = \emptyset, \forall 
         \omega_1, \omega_2 \in \Omega_i;
\end{equation}
\begin{equation}
\label{eq-app-1-par-3}
|T_i(\omega_1)| = |T_i(\omega_2)|, \forall 
         \omega_1, \omega_2 \in \Omega_i.    
\end{equation}
If there is an $\overline{\omega}$ that belongs to both $T_i(\omega_1)$ and $T_i(\omega_2)$, 
then it must be the case that $\omega_1 = \omega_2$. Therefore, $T_i(\omega_1) \cap T_i(\omega_2) = \emptyset, \forall \omega_1, \omega_2 \in \Omega_i$. To see that 
$|T_i(\omega_1)| = |T_i(\omega_2)|$, note that any one-to-one mapping $map(\cdot)$ between
the two sets of edges $\omega_1$ and $\omega_2$ defines a one-to-one mapping between
$T_i(\omega_1)$ and $T_i(\omega_2)$. 

From Equations (\ref{eq-app-1-par-1}) through (\ref{eq-app-1-par-3}), 
the additive property of a probability measure, and the fact  that $|\Omega_i| = \binom{N}{i}$,
we see that for any $T_i(\omega)$,
\begin{equation}
\label{eq-app-prob-formula}
\probab{T_i(\omega)}{\overline{G}(n, m)} = \frac{\probab{E_i}{\overline{G}(n, m)}}
    {\binom{N}{i}}.
\end{equation}

Since parallel edges have no impact on treewidth, we have 
\begin{equation}
\label{eq-app-1-par-4}
\textrm{either } T_i(\omega) \cap \overline{Q}_m = \emptyset \textrm{ or } 
  T_i(\omega) \subset \overline{Q}_m,
\end{equation}
and consequently
\begin{equation}
\label{eq-app-1-par-5}
 \bigcup\limits_{\omega \in Q_i}T_i(\omega) = \overline{Q}_m \cap E_i. 
\end{equation} 
We have
\begin{eqnarray}
\probab{\overline{Q}_m}{\overline{G}(n, m)} &=& \sum\limits_{i = 1}^{m}
    \probab{\overline{Q}_m \cap E_i} {\overline{G}(n,m)}  \nonumber \\
&=&\sum\limits_{i = 1}^{m} \sum\limits_{\omega \in Q_i} 
      \probab{T_i(\omega)} {\overline{G}(n,m)} 
       \ \ \ (\textrm{due to (\ref{eq-app-1-par-2}), (\ref{eq-app-1-par-4}), 
          and (\ref{eq-app-1-par-5})}) \nonumber \\      
&=& \sum\limits_{i = 1}^{m} |Q_i| \frac{\probab{E_i}{\overline{G}(n, m)}}{\binom{N}{i}} 
         \ \ \ (\textrm{due to (\ref{eq-app-prob-formula})})  \nonumber \\
&=& \sum\limits_{i = 1}^{m} \probab{Q_i}{G(n, i)} \probab{E_i}{\overline{G}(n, m)} 
       \ \ \ (\textrm{due to (\ref{eq-app-1-prob-0})}) \nonumber \\
&\leq& \probab{Q_m}{G(n, m)}\sum\limits_{i = 1}^{m}  \probab{E_i}{\overline{G}(n, m)} 
       \nonumber \\
&=& \probab{Q_m}{G(n, m)}
\end{eqnarray}
where the second last inequality is due to the fact that the graph property represented 
by the set of sample points $Q_i$  is 
monotone increasing and Theorem 2.1 in \cite{bollobas01} on the probability 
of monotone increasing properties in the Erd\"{o}s-R\'{e}nyi random graph $\rg{n, m}$.  
This completes the proof of the proposition.
\qed

\section{Proof of Lemmas~\ref{lem-function-0},~\ref{lem-function-1} and~\ref{lem-function-2}}
\label{appendix-2}

\subsection{Proof of Lemma~\ref{lem-function-0}}
Taking derivative on both sides of 
$$
\log f(t) = t\log t + (1-t) \log (1 - t),  
$$
we see that $f(t)$ is increasing on $(0, \frac{1}{2}]$ and decreasing
on $[\frac{1}{2}, 1)$. The lemma follows.
\qed

\subsection{Proof of Lemma~\ref{lem-function-1}}
To show that the function
$$
r(t) = \frac{2t^{2}}{(1 + \epsilon)^2c} 
  \left(\frac{1}{e}\right)^{\frac{4ct}{1 - 2t(1-t)}}
$$
is decreasing in $t$ on the interval $[\frac{1 - \beta}{2}, \frac{2}{3}]$, we show that
the its derivative $r'(t) < 0, \forall t \in [\frac{1 - \beta}{2}, \frac{2}{3}]$. 
To this end, we take take the derivative of the logarithm of $r(t)$  
$$
    \log(r(t)) = 2\log(t) - \frac{4ct}{1-2t+2t^{2}} - \log((1 + \epsilon)^2c)
$$
to get
$$
    \frac{1}{r(t)} r'(t) = 
      \frac{2(1-2t+2t^{2})^2 -  4c(t-2t^2+2t^{3}) - 4c(-2t^{2} + 4t^{3})}
             {t(1-2t+2t^{2})^2}.
$$
Since $r(t) > 0$ and $t(1-2t+2t^{2})^2 > 0$, 
we only need to show that the numerator of the right-hand
side in the above, i.e.,
the function
$$
  h(t) = 2(1-2t+2t^{2})^2 -  4c(t-2t^2+2t^{3}) - 4c(-2t^{2} + 4t^{3}).
$$
is less than zero.
 
Note $h(\frac{1}{2}) = \frac{1}{2} - c < 0$ and
$h(\frac{2}{3}) = \frac{50}{81} - \frac{144}{81}c < 0$. As $h(t)$ is continuous, we have 
that for sufficiently small $\beta > 0$, $h(\frac{1 - \beta}{2}) < 0$ as well.     
It is thus sufficient to show
that $h(t)$ itself is monotone. The first and second derivatives of the function
$h(t)$ are respectively
$$
  h'(t) = 4(-2 + 8t - 12t^2 + 8t^3) - 4c(1 - 8t + 18t^2)
$$
and
$$
  h''(t) = 4[ (8 - 24t + 24t^2) - c(-8 + 36t)].
$$
Note that as a quadratic polynomial,
$h''(t) = 4(24t^2 - (24 + 36c)t + 8(1+c))$ can be shown to be always less
than 0 for any $t \in [\frac{1}{2}, \frac{2}{3}]$.  As $h'(t)$ is continuous and
$h'(\frac{1}{2}) = -4c(1 + \frac{1}{2}) < 0$, we see that for sufficiently small
$\beta > 0$, $h'(\frac{1 - \beta}{2}) < 0$ as well.
It follows that
$h'(t) < 0, \forall t \in [\frac{1 - \beta}{2}, \frac{2}{3}]$. Therefore
$h(t)$ is monotone as required.
\qed

\subsection{Proof of Lemma~\ref{lem-function-2}}
First, since both $1 - 2t + 2t^2 + 2\beta t$  and $\frac{1}{t^t(1 - t)^{1 - t}}$
are increasing on the interval $[\frac{1 - \beta}{2}, \frac{1}{2}]$, we have that
$$
 g(t) = \frac{(1 - 2t + 2t^{2} + 2\beta t)^{c}}{t^{t}(1 - t)^{1 - t}}
$$ 
is increasing on the interval   $[\frac{1 - \beta}{2}, \frac{1}{2}]$.

Focusing now on the interval $[\frac{1}{2}, \frac{2}{3}]$, 
let us consider the  logarithm of the function $g(t)$,
$$
  h(t) = \log g(t) = c\log(1 - 2t + 2t^{2} + 2\beta t) - t\log t - (1 - t)\log (1 - t).
$$
The derivative of $h(t)$ is
$$
  h'(t) = c \frac{-2 + 4t + 2\beta}{1 - 2t + 2t^{2} + 2\beta t} 
      - \log t + \log(1 - t)
$$
and $h'(\frac{1}{2}) \geq 0$. The second-order derivative of $h(t)$ is
\begin{eqnarray*}
 h''(t) &=& \frac{c} { (1 - 2t + 2t^{2} + 2\delta t)^{2}t(1 -t)} \times z(t,
                \delta)
\end{eqnarray*}
where
$$
   z(t, \beta) = 4(1 - 2t + 2t^{2} + 2\beta t)(1 - t)t -
                   (4t - 2 + 2\beta)^{2}(1 - t)t -
                   (1 - 2t + 2t^{2} + 2\beta t)^{2}.
$$
First, assume that $\beta = 0$. On the interval $[\frac{1}{2}, \frac{2}{3}]$,
we have
$$
  (4t - 2 + 2\beta)^{2} \leq (4 \times \frac{2}{3} - 2)^{2} = \frac{4}{9},
$$
$$
   \frac{2}{9} \leq  t(1 - t) \leq \frac{1}{2}(1 - \frac{1}{2}) = \frac{1}{4}
$$
and
$$
 \frac{1}{2} \leq (1 - 2t + 2t^{2} + 2\beta t)^{2} \leq ( 1 - 2\times \frac{2}{3} + 2\times
 (\frac{2}{3})^{2})^{2} = \frac{5}{9}.
$$
It follows that
$$
   z(t, \beta = 0) \geq 4 \times \frac{1}{2} \frac{2}{9} - \frac{1}{9} -
   (\frac{5}{9})^{2} = \frac{2}{81} > 0.
$$
Since the family of functions $z(t, \beta), \beta > 0$ are uniformly
continuous on $[\frac{1}{2}, \frac{2}{3}]$, we have that for small enough
$\beta$, $z(t, \beta) > 0, \forall t \in [\frac{1}{2}, \frac{2}{3}]$. 
It follows that the second-order derivative $h''(t)$ is always greater than zero.
Since $h'(\frac{1}{2}) > 0$, we have that $h'(t) > 0, \forall t \in [\frac{1}{2}, \frac{2}{3}]$. It follows that $h(t)$ is increasing. Consequently, $g(t)$ is also increasing since 
$g(t) > 1,~\forall t \in [\frac{1}{2}, \frac{2}{3}]$, 
\qed

\section*{Acknowledgment}
\noindent
A  weaker version of Theorem \ref{thm-treewidth-bound} was reported  in \cite{yong06treewidth}.
The research is supported by Natural Science and Engineering Research Council of Canada (NSERC) RGPIN 327587-06 and RGPIN 327587-09.

\bibliographystyle{plain}
\bibliography{../treewidth-cocoon06,bib/random_graph,bib/phase_transition,bib/complex_networks}
\end{document}